\def\MPD{\mathrm{MPD}}
\def\MFP{\mathrm{M5P}}
\def\FDM{\mathrm{4DM}}
\let\realendproof=\endproof
\def\endproof{\hspace*{\fill}$\Box$\realendproof}
\definecolor{xxxcolor}{rgb}{0.8,0,0}
\newcounter{section-preserve}
\newcounter{theorem-preserve}
\newcommand{\blank}[1]{}
\newtoks\magicAppendix
\newtoks\magictoks
\newif\iflater
\long\def\later#1{\magictoks={#1}%
  \edef\magictodo{\noexpand\magicAppendix={\the\magicAppendix \par
    \the\magictoks%
  }}
  \magictodo}
\long\def\both#1{\magictoks={#1}%
  \edef\magictodo{\noexpand\magicAppendix={\the\magicAppendix \par
    \noexpand\setcounter{theorem-preserve}{\noexpand\arabic{theorem}}%
    \noexpand\setcounter{theorem}{\arabic{theorem}}%
    \noexpand\setcounter{section-preserve}{\noexpand\arabic{section}}%
    \noexpand\setcounter{section}{\arabic{section}}%
    \noexpand\let\noexpand\oldsection=\noexpand\thesection
    \noexpand\def\noexpand\thesection{\thesection}
    \noexpand\let\noexpand\oldlabel=\noexpand\label
    \noexpand\let\noexpand\label=\noexpand\blank
    \the\magictoks%
    \noexpand\setcounter{theorem}{\noexpand\arabic{theorem-preserve}}%
    \noexpand\setcounter{section}{\noexpand\arabic{section-preserve}}%
    \noexpand\let\noexpand\thesection=\noexpand\oldsection
    \noexpand\let\noexpand\label=\noexpand\oldlabel
  }}
  \magictodo
  \the\magictoks}
\def\magicappendix{\latertrue \the\magicAppendix}
\newif\ifabstract
\newif\iffull
 \long\def\both#1{#1}
 \let\later=\both
 \def\magicappendix{}
\title{Dissection with the Fewest Pieces is Hard, \\ Even to Approximate}
\author{
Jeffrey Bosboom\inst{1} \and 
Erik D. Demaine\inst{1} \and 
Martin L. Demaine\inst{1} \and
Jayson Lynch\inst{1} \and
Pasin Manurangsi\inst{2}\thanks{Part of this work was completed while the author was at Massachusetts Institute of Technology and Dropbox, Inc.} \and
Mikhail Rudoy\inst{1} \and
Anak Yodpinyanee\inst{1}\thanks{Research supported by NSF grants CCF-1217423, CCF-1065125, and CCF-1420692.}}
\institute{
        Computer Science and AI Laboratory,
	Massachusetts Institute of Technology \\
	32 Vassar St., Cambridge, MA 02139, USA \\
        \url{{jbosboom,edemaine,mdemaine,jaysonl,mrudoy,anak}@mit.edu}
	\and
	University of California, 
	Berkeley, CA 94720, USA \\
        \url{pasin@berkeley.edu}}
\date{\today}
\begin{document}
\maketitle

\begin{abstract}
  We prove that it is NP-hard to dissect one simple orthogonal polygon into
  another using a given number of pieces, as is approximating the fewest
  pieces to within a factor of $1+1/1080-\varepsilon$.
\end{abstract}

\section{Introduction}

We have known for centuries how to dissect any polygon $P$ into any
other polygon $Q$ of equal area, that is, how to cut $P$ into
finitely many pieces and re-arrange the pieces to form~$Q$
\cite{Frederickson-1997,Lowry-1814,Wallace-1831,Bolyai-1832,Gerwien-1833}.
But we know relatively little about how many pieces are necessary.
For example, it is unknown whether a square can be dissected into an
equilateral triangle using fewer than four pieces
\cite[pp.~8--10]{Dudeney-1902-hinged,Frederickson-2002}.
Only recently was it established that a pseudopolynomial number of
pieces suffices~\cite{FeatureSize_EGC2011f}.

In this paper, we prove that it is NP-hard even to approximate the minimum
number of pieces required for a dissection, to within some constant ratio.
While perhaps unsurprising, this result is the first analysis
of the complexity of dissection.  We prove NP-hardness even when
the polygons are restricted to be simple (hole-free) and orthogonal.
The reduction holds for all cuts that leave the resulting pieces connected,
even when rotation and reflection are permitted or forbidden.

Our proof significantly strengthens the observation (originally made by the
Demaines during JCDCG'98) that the second half of dissection---re-arranging
given pieces into a target shape---is NP-hard: the special case of exact
packing rectangles into rectangles can directly simulate \textsc{3-Partition}
\cite{Jigsaw_GC}.
Effectively, the challenge in our proof is to construct a polygon for which
any $k$-piece dissection must cut the polygon at locations we desire, so that
we are left with a rectangle packing problem.

\section{The Problems} \label{sec-prob}

\subsection{Dissection}

We begin by formally defining the problems involved in our proofs, starting with \textsc{$k$-Piece Dissection}, which is the central focus of our paper.

\begin{definition}
	\textsc{$k$-Piece Dissection} is the following decision problem:

	\textsc{Input:} two polygons $P$ and $Q$ of equal area, and a positive integer $k$.

	\textsc{Output:} whether $P$ can be cut into $k$ pieces such that these $k$ pieces can be packed into $Q$ (via translation, optional rotation, and optional reflection).
\end{definition}

To prevent ill-behaved cuts, we require every piece to be a \emph{Jordan region (with holes)}: the set of points interior to a Jordan curve $e$ and exterior to $k \geq 0$ Jordan curves $h_1, h_2, \dots, h_k$, such that $e, h_1, h_2, \dots, h_k$ do not meet.
There are two properties of Jordan regions that we use in our proofs. First, Jordan regions are Lebesgue measurable; we will refer to the Lebesgue measure of each piece as its area. Second, a Jordan region is path-connected. For brevity, we refer to path-connected as connected throughout the paper.

Next we define the optimization version of the problem, \textsc{Min Piece Dissection}, in which the objective is to minimize the number of pieces.

\begin{definition}
	\textsc{Min Piece Dissection} is the following optimization problem:

	\textsc{Input:} two polygons $P$ and $Q$ of equal area.

	\textsc{Output:} the smallest positive integer $k$ such that $P$ can be cut into $k$ pieces such that these $k$ pieces can be packed into $Q$.
\end{definition}

\subsection{5-Partition}

Our NP-hardness reduction for \textsc{$k$-Piece Dissection} is from
\textsc{5-Partition}, a close relative of \textsc{3-Partition}.

\begin{definition}
	\textsc{5-Partition} is the following decision problem:

	\textsc{Input:} a multiset $A = \{a_1, \dots, a_n\}$ of $n = 5m$ integers.

	\textsc{Output:} whether $A$ can be partitioned into $A_1, \dots, A_m$ such that, for each $i = 1, \dots, m$, $\sum_{a \in A_i} a = p$ where $p = \left(\sum_{a \in A} a \right)/m$.
\end{definition}

Throughout the paper, we assume that the partition sum $p$ is an integer; otherwise, the instance is obviously a \textsc{No} instance.

Garey and Johnson~\cite{Garey-Johnson-1975} originally proved NP-completeness of \textsc{3-Partition}, a problem similar to \textsc{5-Partition} except that 5 is replaced by 3. In their book~\cite{Garey-Johnson-1979}, they show that \textsc{4-Partition} is NP-hard; this result was, in fact, an intermediate step toward showing that \textsc{3-Partition} is NP-hard. It is easy to reduce \textsc{4-Partition} to \textsc{5-Partition} and thus show it also NP-hard.\footnote{Given a \textsc{4-Partition} instance $A = \{a_1, \dots, a_n\}$, we can create a \textsc{5-Partition} instance by setting $A' = \{na_1, \dots, na_n, 1, \dots, 1\}$ where the number of $1$s is $n/4$.}

Our reduction would work from 3-Partition just as well as 5-Partition. The advantage of the latter is that we can analyze the following optimization version.

\begin{definition}
	\textsc{Max 5-Partition} is the following optimization problem:

	\textsc{Input:} a multiset $A = \{a_1, \dots, a_n\}$ of $n = 5m$ integers.

	\textsc{Output:} the maximum integer $m'$ such that there exist disjoint subsets $A_1, \dots, A_{m'}$ of $A$ such that, for each $i = 1, \dots, m'$, $\sum_{a \in A_i} a = p$ where $p = \frac{5}{n}\left(\sum_{a \in A} a \right)$.
\end{definition}

\subsection{4-Dimensional Matching}

While \textsc{5-Partition} is known to be NP-hard, we are not aware of any results on the hardness of approximating \textsc{Max 5-Partition}.  We derive the result ourselves by reducing from \textsc{4-Uniform 4-Dimensional Matching} (\textsc{4DM}).

\begin{definition}
	\textsc{4-Uniform 4-Dimensional Matching} (or \textsc{4DM}) is the following optimization problem:

	\textsc{Input:} a 4-uniform 4-partite balanced hypergraph $H = (V^1, V^2, V^3, V^4, E)$.

	\textsc{Output:} a matching of maximum size in $H$, i.e., a set $E' \subseteq E$ of maximum size such that none of the edges in $E'$ share the same vertex.
\end{definition}

Hazan, Safra, and Schwartz~\cite{Hazan-Safra-Schwartz-2003} proved inapproximability of \textsc{4-Uniform 4-Dimensional Matching}.  We use this to prove hardness of approximating \textsc{Max 5-Partition}, which we then reduce to \textsc{Min Piece Dissection} to establish the hardness of its approximation.

\subsection{Gap Problems}

We show that our reductions have a property stronger than approximation preservation called \emph{gap preservation}.
Let us define the gap problem for an optimization problem, a notion widely used in hardness of approximation.

\begin{definition}
	For an optimization problem $P$ and parameters $\beta > \gamma$ (which may be functions of $n$), the \textsc{Gap}$_P[\beta, \gamma]$ problem is to distinguish whether the optimum of a given instance of $P$ is at least $\beta$ or at most $\gamma$.  The input instance is guaranteed to not have an optimum between $\beta$ and $\gamma$.
\end{definition}

If \textsc{Gap}$_P[\beta, \gamma]$ is NP-hard, then it immediately follows that approximating $P$ to within a factor of $\beta/\gamma$ of the optimum is also NP-hard.  This result makes gap problems useful for proving hardness of approximation.

\section{Main Results}

Now that we have defined the problems, we state our main results.

\begin{theorem} \label{thm-main}
	\textsc{$k$-Piece Dissection} is NP-hard.
\end{theorem}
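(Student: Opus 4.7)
The plan is to reduce from \textsc{5-Partition}, as foreshadowed in Section \ref{sec-prob}. Given an instance $A = \{a_1, \dots, a_n\}$ with $n = 5m$ and target partition sum $p$, I would construct in polynomial time a pair of simple orthogonal polygons $P$ and $Q$ of equal area, together with an integer $k$, such that $P$ admits a $k$-piece dissection into $Q$ if and only if $A$ has a 5-partition.

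The shape $P$ would be a long, thin orthogonal region partitioned conceptually into $n$ \emph{item blocks} of widths $a_1, \dots, a_n$ at a common height $h$, decorated with carefully placed boundary features (small notches, teeth, or bumps) at the item boundaries. The shape $Q$ would consist of $m$ \emph{bin slots} of width $p$ and height $h$, joined into a single simple polygon by thin bridges or by arranging the slots in a snake pattern, and decorated with matching boundary features so that the overall area and feature count of $P$ and $Q$ agree. The piece budget $k$ would be set to $n$ plus a fixed number of pieces needed to account for the connective tissue of $Q$.

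For the forward direction, a 5-partition $A_1, \dots, A_m$ immediately yields a $k$-piece dissection: one cuts $P$ at each of the $n-1$ item boundaries (plus the constant number of cuts required by the framework), producing $n$ rectangles of dimensions $h \times a_i$, and packs the rectangles in group $A_i$ into the $i$th bin slot of $Q$, which has total capacity $h \times p$.

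The hard part, as the introduction emphasizes, is the reverse direction: showing that any $k$-piece dissection of $P$ into $Q$ must cut at the item boundaries. My plan is to design the boundary features so that each piece of any valid dissection either lies entirely within a single item block, or else incurs at least one extra piece because its boundary cannot be matched to $Q$ without further subdivision. A careful accounting would then show that the piece count is minimized precisely when the cuts occur exactly at the $n-1$ item boundaries, leaving behind $n$ axis-aligned rectangles of widths $a_1, \dots, a_n$. At that point, packing these rectangles into the $m$ bin slots of width $p$ is a pure rectangle-packing problem equivalent to \textsc{5-Partition}. The main obstacle is engineering the boundary features with enough rigidity to rule out clever alternative dissections (including those that exploit rotation or reflection), while keeping $P$ and $Q$ simple and orthogonal.
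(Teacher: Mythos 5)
Your forward direction is fine and your overall plan (reduce from \textsc{5-Partition}, make the pieces simulate a rectangle-packing of widths $a_i$ into bins of width $p$) is the same as the paper's. But the reverse direction --- the only hard part --- is not a proof; it is a statement of intent. You say you would ``design the boundary features so that each piece\ldots either lies entirely within a single item block, or else incurs at least one extra piece,'' and that ``a careful accounting would then show'' the cuts must fall at item boundaries. No such gadget is constructed and no such accounting is given, and this is precisely where the difficulty lives: cuts may be arbitrary Jordan curves, pieces may be rotated and reflected, and a notch or tooth on a piece's boundary need not be matched against the boundary of $Q$ at all --- it can be buried against another piece in the interior of $Q$ --- so ``its boundary cannot be matched to $Q$ without further subdivision'' does not follow from decorating $P$ and $Q$ with matching features. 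Worse, the exact statement you hope to engineer (each piece contains at most one whole item block, i.e.\ cuts occur exactly at the $n-1$ item boundaries) is something the paper explicitly could \emph{not} prove even for its own construction; it had to retreat to a weaker statement about \emph{trimmed} element rectangles and then recover the \textsc{5-Partition} structure by an area-comparison argument.

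The paper's actual mechanism is different from yours and worth contrasting: instead of boundary features, it joins the element rectangles (height $1$, widths $a_i$) and the partition rectangles (width $p$) by a bar of tiny height $\delta$, with the element rectangles spaced $d_s$ apart and the partition rectangles spaced $d_t$ apart, where $d_s$ is too large for parts of two element rectangles to fit in one partition rectangle and $d_t$ is too large for one piece to reach two partition rectangles. The key step is Lemma~\ref{lem-one-per-rec}: any piece meets at most one \emph{trimmed} element rectangle, proved by a geometric argument (a connected piece spanning two trimmed rectangles would force a triangle of height more than $2\delta$ inside the bar, contradicting Lemma~\ref{lem-tri-height}). With $k=n$ this pigeonholes one trimmed rectangle per piece, and then an area bound on ``extended partition rectangles'' (area $p+p\delta+\delta^2 < p+\tfrac12$ versus trimmed area at least $p+1-4\delta\sum_a a$) shows each bin's element sum is exactly $p$. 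Your proposal is missing both the quantitative gadget (the roles of $\delta$, $d_s$, $d_t$) and the substitute for the unprovable ``one item block per piece'' claim, so as written it does not establish the reverse direction.
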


We do not know whether \textsc{$k$-Piece Dissection} is in NP (and thus is NP-complete).  We discuss the difficulty of showing containment in NP in Section~\ref{sec-open}.

We also prove that the optimization version, \textsc{Min Piece Dissection}, is hard to approximate to within some constant ratio:

\begin{theorem} \label{thm-main-approx}
	There is a constant $\varepsilon_{\MPD} > 0$ such that it is NP-hard to approximate \textsc{Min Piece Dissection} to within a factor of $1 + \varepsilon_{\MPD}$ of optimal.\footnote{The best $\varepsilon_{\MPD}$ we can achieve is $1/1080 - \varepsilon$ for any $\varepsilon \in (0, 1/1080)$.}
\end{theorem}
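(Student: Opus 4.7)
The plan is to obtain the inapproximability result by composing two gap-preserving reductions into a chain, starting from the known inapproximability of \textsc{4DM} due to Hazan--Safra--Schwartz: namely, \textsc{4DM} $\to$ \textsc{Max 5-Partition} $\to$ \textsc{Min Piece Dissection}. Since each reduction will increase the optimum by a linear function of the starting optimum, the constant ratio gap of \textsc{4DM} propagates to a constant ratio gap for \textsc{Min Piece Dissection}.

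For the first reduction, given a 4-uniform 4-partite balanced hypergraph $H=(V^1,V^2,V^3,V^4,E)$, I would assign each vertex $v\in V^j$ a large integer whose base-$B$ digits are zero outside a ``slot'' reserved for $(j,v)$, chosen so that: (a)~any group of five integers summing to the target $p$ must consist of four vertex-integers, one from each $V^j$ corresponding to an actual hyperedge, together with a single edge-specific ``glue'' integer; and (b)~unmatched vertices can still be absorbed into ``trash'' groups of five filler integers summing to $p$. Padding $|E|$ extra glue elements and enough fillers so that $n = 5m$ and $\sum a = mp$, I then argue that a matching of size $M$ in $H$ produces a partition with $M+f(M)$ full groups where $f$ is linear, and conversely that any \textsc{Max 5-Partition} solution yields a matching of essentially the same size. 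This makes the reduction gap-preserving with constant loss.

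For the second reduction, I would invoke the construction used for Theorem~\ref{thm-main}: a \textsc{5-Partition} instance $A$ yields polygons $P,Q$ where $Q$ is a disjoint union of $m$ target ``boxes'' of width $p$ (attached through thin necks to make $Q$ simple and orthogonal), and $P$ contains gadgetry forcing any small-piece dissection to cut out rectangles corresponding to the $a_i$ and pack five of them per box. The \textsc{Yes} case achieves some baseline piece count $k^*$. To make this gap-preserving, I need the sharper claim that the minimum number of pieces grows linearly with $m-m'$, the number of groups that \emph{cannot} be formed: each such unfilled box forces at least some constant $c$ extra pieces, giving $\MPD(P,Q) \ge k^* + c(m-m')$, and conversely a canonical dissection of any partial solution achieves $k^* + C(m-m')$ from above. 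Combined, the ratio $\MPD/k^*$ is then bounded below by a constant times $(1+(m-m')/m)$, which transfers the \textsc{Max 5-Partition} gap.

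The main obstacle is the lower-bound direction of step two: ruling out clever global dissections that amortize pieces across several ``bad'' boxes, so that failing to form a group costs strictly fewer than $c$ additional pieces on average. I would handle this by a charging argument local to each box of $Q$: the perimeter/area bookkeeping that forces cuts to be axis-aligned in the \textsc{Yes} case (from the proof of Theorem~\ref{thm-main}) also forces every box that receives a non-canonical filling to absorb at least one extra internal cut, and each piece touches $O(1)$ boxes. Tracking the constants through both reductions then yields $\varepsilon_{\MPD}>0$; the concrete value $1/1080-\varepsilon$ quoted in the footnote should emerge from the specific gap given by Hazan--Safra--Schwartz together with the explicit linear loss factors in the two reductions, without any further ideas.
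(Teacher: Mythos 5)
Your overall plan coincides with the paper's: the same chain \textsc{4DM} $\to$ \textsc{Max 5-Partition} $\to$ \textsc{Min Piece Dissection}, the same polygon construction as Theorem~\ref{thm-main}, and the same quantitative mechanism (pieces beyond the baseline $n$ correspond to groups that cannot be formed, yielding $\alpha_{\MPD}=1+\frac{1}{5}(1-1/\alpha_{\MFP})=1+1/1080$). The genuine gap is exactly where you flag ``the main obstacle.'' Your lower bound leans on ``perimeter/area bookkeeping that forces cuts to be axis-aligned,'' but no such fact is available: cuts are arbitrary Jordan-region cuts, nothing forces them to be straight or axis-aligned, and the proof of Theorem~\ref{thm-main} never establishes this. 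What actually carries the soundness direction is Lemma~\ref{lem-one-per-rec} (every piece meets at most one \emph{trimmed} element rectangle, i.e.\ the element rectangle minus its bottom $4\delta$), proved by a separate geometric argument about the thin bar. Given that lemma, the counting is on pieces, not on cuts charged to boxes: with $k$ pieces, at least $2n-k$ pieces are whole trimmed element rectangles, each of the at most $k-n$ fragment pieces can touch at most one partition rectangle (by the choice of $d_t$), so at least $n/5-(k-n)$ ``good'' boxes remain; and one must then show each good box's elements sum to \emph{exactly} $p$, which needs a two-sided area argument --- not only the ``too large'' case via the extended partition rectangle, but also a ``too small'' case: if the sum were at most $p-1$, a unit of area inside that box would be uncovered by trimmed element rectangles, whereas the total area of $P$ outside all trimmed element rectangles is less than $1$. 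Your charging sketch (``each unfilled box absorbs at least one extra internal cut; each piece touches $O(1)$ boxes'') supplies neither ingredient and, as justified, rests on structure the construction does not enforce.

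A secondary issue: your conditions (a) and (b) for the \textsc{4DM} reduction are in conflict as stated --- if every $5$-set summing to $p$ must consist of four vertex numbers plus an edge ``glue'' number, there can be no trash groups of five fillers summing to $p$. The paper's design avoids this by giving each vertex one matching element and exactly three vertex-specific dummy elements, so that \emph{every} valid group contains an edge element and is either a matching subset or a dummy subset; completeness uses matched edges plus dummy subsets for almost all unmatched edges, and soundness subtracts the at most $3n/20$ possible dummy subsets before reading off a matching. Without some such vertex-specific control on the fillers, free-standing filler groups inflate the \textsc{Max 5-Partition} optimum independently of the matching size and the soundness of your first reduction fails.
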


Both results are based on essentially the same reduction, from \textsc{5-Partition} for Theorem~\ref{thm-main} or from \textsc{Max 5-Partition} for Theorem~\ref{thm-main-approx}. We present the common reduction in Section~\ref{sec-reduction}. We then prove Theorem~\ref{thm-main} and Theorem~\ref{thm-main-approx} in Sections~\ref{sec-proof} and~\ref{sec-proof-approx} respectively.

Restricting the kinds of polygons given as input, the kinds of cuts allowed, and the ways the pieces can be packed gives rise to many variant problems.  Section~\ref{sec-variant} explains for which variants our results continue to hold.

\section{The Reduction} \label{sec-reduction}

This section describes a polynomial-time reduction from \textsc{5-Partition} to \textsc{$k$-Piece Dissection} and states a lemma crucial to both of our main proofs later in the paper.  The proof of the lemma is deferred to \ifabstract Appendix\else Section\fi~\ref{sec-lem-proof}.

\paragraph{Reduction from \textsc{5-Partition} to \textsc{$k$-Piece Dissection}.} Let $A = \{a_1, \dots, a_n\}$ be the given \textsc{5-Partition} instance and let $p = \frac{5}{n}\Sigma_{a \in A} a$ denote the target sum. Let $d_s = 12 (\max_{a \in A} a + p)$ and $d_t = (n-1)d_s + \Sigma_{a \in A} a + 2\max_{a \in A} a$.  We create a source polygon $P$ consisting of \emph{element rectangles} of width $a_i$ and height 1 for each $a_i \in A$ spaced $d_s$ apart, connected below by a rectangular \emph{bar} of width $\Sigma_{a \in A} a + (\frac{n}{5} - 1)d_t$ and height $\delta = \frac{1}{10\Sigma_{a \in A} a + 2(\frac{n}{5} - 1)d_t}$.  The first element rectangle's left edge is flush with the left edge of the bar; the bar extends beyond the last element rectangle. Our target polygon $Q$ consists of $\frac{n}{5}$ \emph{partition rectangles} of width $p$ and height 1 spaced $d_t$ apart, connected by a bar of the same dimensions as the source polygon's bar.  The first partition rectangle's left edge and last partition rectangle's right edge are flush with the ends of the bar.  Both polygons' bars have the same area and the total area of the element rectangles equals the total area of the partition rectangles, so the polygons have the same area. Finally, let the number of pieces $k$ be $n$.

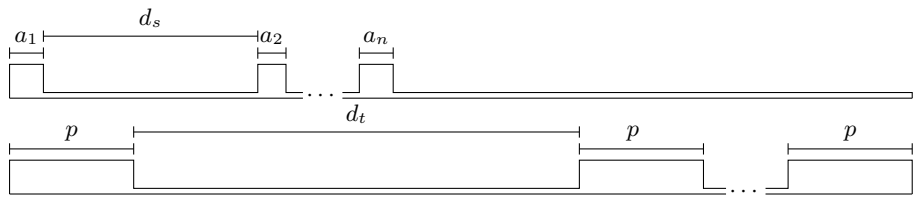
\begin{figure}[h]
\centering
\begin{tikzpicture}[scale=0.75]
\draw (1,0.9)--(1,1.5)--(1.6,1.5)--(1.6,1)--(5.4,1)--(5.4,1.5)--(5.9,1.5)--(5.9,1)--(6.2,1);
\draw (6.9,1)--(7.2,1)--(7.2,1.5)--(7.8,1.5)--(7.8,1)--(17,1)--(17,0.9);
\draw (1,0.9)--(6.2,0.9);
\draw (6.9,0.9)--(17,0.9);
\draw (1,1.7)--(1.6,1.7);
\draw (1,1.6)--(1,1.8);
\draw (1.6,1.6)--(1.6,1.8);
\draw (5.4,1.7)--(5.9,1.7);
\draw (5.4,1.6)--(5.4,1.8);
\draw (5.9,1.6)--(5.9,1.8);
\draw (7.2,1.7)--(7.8,1.7);
\draw (7.2,1.6)--(7.2,1.8);
\draw (7.8,1.6)--(7.8,1.8);
\draw (1.6,2)--(5.4,2);
\draw (1.6,1.9)--(1.6,2.1);
\draw (5.4,1.9)--(5.4,2.1);
\node at (6.55,0.95) {$\cdots$};
\node[above] at (1.3,1.7) {$a_1$};
\node[above] at (5.65,1.7) {$a_2$};
\node[above] at (7.5,1.7) {$a_n$};
\node[above] at (3.5,2) {$d_s$};

\draw (1,-0.8)--(1,-0.2)--(3.2,-0.2)--(3.2,-0.7)--(11.1,-0.7)--(11.1,-0.2)--(13.3,-0.2)--(13.3,-0.7)--(13.7,-0.7);
\draw (14.4,-0.7)--(14.8,-0.7)--(14.8,-0.2)--(17,-0.2)--(17,-0.8);
\draw (1,-0.8)--(13.7,-0.8);
\draw (14.4,-0.8)--(17,-0.8);
\draw (1,0)--(3.2,0);
\draw (1,-0.1)--(1,0.1);
\draw (3.2,-0.1)--(3.2,0.1);
\draw (11.1,0)--(13.3,0);
\draw (11.1,-0.1)--(11.1,0.1);
\draw (13.3,-0.1)--(13.3,0.1);
\draw (14.8,0)--(17,0);
\draw (14.8,-0.1)--(14.8,0.1);
\draw (17,-0.1)--(17,0.1);
\draw (3.2,0.3)--(11.1,0.3);
\draw (3.2,0.2)--(3.2,0.4);
\draw (11.1,0.2)--(11.1,0.4);
\node at (14.05,-0.75) {$\cdots$};
\node[above] at (2.1,0) {$p$};
\node[above] at (12.05,0) {$p$};
\node[above] at (15.9,0) {$p$};
\node[above] at (7.15,0.3) {$d_t$};
\end{tikzpicture}
\caption{The source polygon $P$ (above) and the target polygon $Q$ (below) are shown (not to scale). Length $d_t$ is longer than the distance between the leftmost edge of the leftmost element rectangle and the rightmost edge of the rightmost element rectangle.}
\end{figure}

\paragraph{Reduction from \textsc{Max 5-Partition} to \textsc{Min Piece Dissection}.} The optimization problem uses the same reduction as the decision problem, except that we do not specify $k$ for the optimization problem.

The idea behind our reduction is to force any valid dissection to cut each element rectangle off the bar in its own piece.\footnote{Because $k = n$, $a_1$ will remain attached to the bar, forcing it to be the first element rectangle placed in the first partition rectangle.  Because the order of and within partitions does not matter, this constraint does not affect the \textsc{5-Partition} simulation.}  When $\delta$ is small enough, the resulting packing problem is a direct simulation of \textsc{5-Partition}.

Intuitively, each dissected piece should contain only one element rectangle. Our reduction sets $d_s$ large enough that any piece containing parts of two element rectangles does not fit in a partition rectangle. At the same time, we pick $d_t$ large enough that no piece can be placed in more than one partition rectangle.  Thus one could plausibly prove that each element rectangle must be in its own piece.

Unfortunately, we were unable to prove that each element rectangle must be in its own piece. For each element rectangle, we define the \emph{trimmed element rectangle} corresponding to each element rectangle as the rectangle resulting from ignoring the lower $4\delta$ of the element rectangle's height; see Figure~\ref{fig-trimmedelmrect}. In other words, for each $a_i$, the corresponding trimmed element rectangle is the rectangle that shares the upper left corner with the element rectangle and is of width $a_i$ and height $1 - 4\delta$.

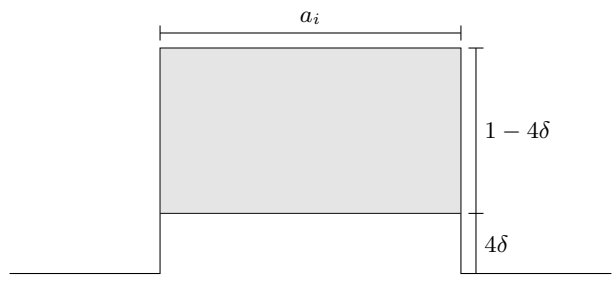
\begin{figure}[h] 
\centering
\begin{tikzpicture}
\draw (1,1)--(3,1)--(3,4)--(7,4)--(7,1)--(9,1);
\draw (1,0.8)--(9,0.8);
\draw (3, 1.8)--(7,1.8);
\path [fill=gray, opacity=0.2] (3,1.8) rectangle (7,4);
\draw (3,4.2)--(7,4.2);
\draw (3,4.3)--(3,4.1);
\draw (7,4.3)--(7,4.1);
\draw (7.2,1)--(7.2,4);
\draw (7.1,4)--(7.3,4);
\draw (7.1,1.8)--(7.3,1.8);
\node[above] at (5,4.2) {$a_i$};
\node[right] at (7.2,1.4) {$4\delta$};
\node[right] at (7.2,2.9) {$1 - 4\delta$};
\end{tikzpicture}
\caption{The $i$th trimmed element rectangle. \label{fig-trimmedelmrect}}
\end{figure}

While we could not prove that each element rectangle is in its own piece, we can prove the corresponding statement about trimmed element rectangles:

\begin{lemma} \label{lem-one-per-rec}
	If $P$ can be cut into pieces that can be packed into $Q$, then each of these pieces intersect with at most one trimmed element rectangle.
\end{lemma}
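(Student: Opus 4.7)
Suppose for contradiction that some piece $\pi$ of a valid dissection intersects two distinct trimmed element rectangles $T_{i_1}$ and $T_{i_2}$, and let $\phi$ be the rigid motion placing $\pi$ into $Q$. Since $T_{i_1}$ and $T_{i_2}$ are separated horizontally by at least $d_s$, the piece $\pi$ has horizontal extent at least $d_s$ in $P$, while its vertical extent is at most $1+\delta$. Requiring $\phi(\pi)\subseteq Q$ (which itself has vertical extent only $1+\delta$) forces the rotation angle $\theta$ of $\phi$ to satisfy $|\sin\theta| \le (1+\delta)/d_s$: a horizontal span of $d_s$ rotated by more than that would give vertical extent in $Q$ exceeding $1+\delta$. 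A rotation near $\pm\pi/2$ is ruled out analogously, since it would turn $d_s$ into vertical extent. So $\phi$ is essentially axis-aligned (a translation plus optional horizontal or vertical reflection), up to a small tilt.

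Next, pick $p_j \in T_{i_j}\cap\pi$ for $j = 1, 2$. Their Euclidean distance $D$ is preserved by $\phi$, and satisfies $d_s \le D \le \sqrt{((n-1)d_s + \Sigma_{a\in A} a)^2 + (1+\delta)^2} < d_t$ using $d_t = (n-1)d_s + \Sigma_{a\in A} a + 2\max_{a\in A} a$. Since $\phi(p_1)$ and $\phi(p_2)$ both lie in $Q$, whose vertical extent is $1+\delta$, their horizontal separation in $Q$ is between $\sqrt{D^2 - (1+\delta)^2}$ and $D$; hence it strictly exceeds $p$ (because $d_s = 12(\max_{a\in A} a + p)$) yet is strictly less than $d_t$. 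Thus $\phi(p_1), \phi(p_2)$ can neither both lie in a single partition rectangle of width $p$ (too wide) nor in two different partition rectangles whose $x$-extents are separated by $d_t$ (not wide enough). So at least one of $\phi(p_j)$ must lie in $Q$'s bar.

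The main obstacle is ruling out this last case, since $Q$'s bar (height $\delta$) interacts subtly with both the rotation tilt and the piece's small vertical structure. My plan is to use that $\pi$ is connected and intersects both $T_{i_1}$ and $T_{i_2}$, so any path within $\pi$ joining them must traverse $P$'s bar (the only corridor between distinct element rectangles); hence $\pi$ contains both \emph{high} points (in the trimmed rectangles, $y\ge 5\delta$) and \emph{low} points (in $P$'s bar, $y\le\delta$). Choosing each $p_j$ to have near-maximum $y$-coordinate in $T_{i_j}\cap\pi$ and combining this vertical spread with the near-axis-aligned rigid motion, I would argue that any placement pushing $\phi(p_j)$ down into $Q$'s bar simultaneously pushes some other low point of $\pi$ below $y=0$, contradicting $\phi(\pi)\subseteq Q$. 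The principal technical challenge is quantifying the tilt-induced perturbations when the piece's horizontal span is comparable to $d_s$; I anticipate this will require careful use of the specific value of $\delta$ chosen in the reduction, so that the $\delta$-scale vertical structure dominates the rotational tilt error accumulated over the piece's $x$-span.
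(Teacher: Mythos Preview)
Your outline is sound through the point where you conclude that at least one of $\phi(p_1),\phi(p_2)$ must land in the bar of $Q$; the distance bookkeeping with $d_s$ and $d_t$ is essentially correct (minor point: your bound on $|\sin\theta|$ should be roughly $2(1+\delta)/d_s$, since the piece's own vertical extent can cancel part of the rotational rise). The difficulty is the final step, and here your plan has a real quantitative gap.

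Your rotation bound gives $|\sin\theta|=O(1/d_s)$, and the ``low point'' $q$ you need (where the piece passes through $P$'s bar near the $i_j$th element rectangle) can be horizontally displaced from $p_j$ by as much as $a_{i_j}$. The tilt therefore perturbs their relative $y$-coordinates after $\phi$ by order $a_{i_j}/d_s$, which can be as large as $1/12$ since $a_{i_j}\le d_s/12$ is the only control you have. This is enormous compared with the $4\delta$ buffer between the trimmed rectangle and the bar, so ``$\phi(p_j)$ in the bar $\Rightarrow$ $\phi(q)$ below $y=0$'' does not follow. The value of $\delta$ in the construction is far too small to absorb a tilt of this magnitude; no choice of $p_j$ with maximal height helps, because you have no lower bound on how high $\pi$ reaches inside $T_{i_j}$ beyond $5\delta$.

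The paper sidesteps the rotation issue entirely by a rigidity argument that is rotation-invariant. It defines a neighborhood $T_i$ (the strip below the $i$th trimmed rectangle together with a bar segment of length $d_s/3$ beside it), assumes $\phi(R\cap T_i)$ lies wholly in $Q$'s bar, and then uses connectivity to locate three points $I,J,K\in R\cap T_i$ on three specific cross-segments. From convex combinations of these it extracts a right triangle with legs exceeding $3\delta$ and $d_s/12$; such a triangle has every altitude greater than $2\delta$, which is impossible inside a strip of height~$\delta$ (a one-line helper lemma). Thus some point of $R\cap T_i$ must land in a partition rectangle, and the same for $T_j$; the two such points are at distance between $d_s/3$ and $d_t$, yielding the contradiction as in your Step~4. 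The key idea you are missing is to compare a \emph{triangle} (whose shape is preserved by $\phi$) to the bar, rather than comparing $y$-coordinates, which are not.
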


The proofs of both of our main theorems use this lemma. The intuition behind the proof of this lemma is similar to the intuitive argument for why each element rectangle should be in its own piece. As the details of the proof are not central to this paper, we defer the proof of this lemma to \ifabstract Appendix\else Section\fi~\ref{sec-lem-proof}.

\section{Proof of NP-hardness of \textsc{$k$-Piece Dissection}} \label{sec-proof}

\iffull
	Before we prove Theorem~\ref{thm-main}, we state the result from~\cite{Garey-Johnson-1979} for \textsc{5-Partition}:
	\begin{theorem}[\cite{Garey-Johnson-1979}] \label{thm-5-part-hard}
		\textsc{5-Partition} is NP-hard.\footnote{As stated earlier, the result from~\cite{Garey-Johnson-1979} is for \textsc{4-Partition}, but \textsc{4-Partition} is easily reduced to \textsc{5-Partition}; see Section~\ref{sec-prob}.}
	\end{theorem}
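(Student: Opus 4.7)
The plan is to reduce from \textsc{4-Partition}, which Garey and Johnson~\cite{Garey-Johnson-1979} proved NP-hard, and in fact strongly NP-hard in the formulation where each input integer is constrained to lie strictly between $p/5$ and $p/3$ so that every part of the partition necessarily contains exactly four elements. Given such an instance $A = \{a_1, \dots, a_n\}$ with $n = 4m$ and per-part target $p = (\sum_j a_j)/m$, I use the map already announced in the footnote of Section~\ref{sec-prob}:
\[
A' = \{na_1, \dots, na_n\} \cup \{\underbrace{1, 1, \dots, 1}_{m \text{ copies}}\},
\]
so that $|A'| = 5m$ and $\sum A' = n \cdot mp + m = m(np + 1)$, giving per-part target $p' = np + 1$ for the resulting \textsc{5-Partition} instance. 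The construction is clearly polynomial in the input size.

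The forward direction is straightforward: any valid 4-partition $A_1 \sqcup \cdots \sqcup A_m$ of $A$ becomes a 5-partition of $A'$ by scaling each $A_i$ by $n$ and appending one copy of $1$; each resulting part then sums to $np + 1 = p'$.

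For the reverse direction, consider any valid 5-partition $A'_1 \sqcup \cdots \sqcup A'_m$ and let $\ell_i$ denote the number of $1$'s assigned to $A'_i$. The remaining elements of $A'_i$ are all multiples of $n$, so the scaled sum $p' - \ell_i = np + 1 - \ell_i$ must be divisible by $n$, forcing $\ell_i \equiv 1 \pmod{n}$. Since $0 \le \ell_i \le m = n/4 < n$ (for $n \ge 4$), this pins down $\ell_i = 1$ for every $i$. The scaled elements in each $A'_i$ then sum to exactly $np$, so the corresponding original $a_j$'s sum to $p$; the Garey--Johnson range constraint $p/5 < a_j < p/3$ forces each such group to contain exactly four elements. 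Stripping the $1$'s and dividing by $n$ thus recovers a valid 4-partition of $A$, completing the reduction.

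The step I anticipated as the main obstacle was ruling out ``stray'' distributions such as one part containing five scaled elements (summing to exactly $np+1$) together with zero $1$'s, alongside some other part compensating with too few scaled elements and extra $1$'s. But the short modular argument $\ell_i \equiv 1 \pmod{n}$ combined with $\ell_i \le m < n$ dispatches every such possibility in a single line, so the reduction goes through cleanly with no substantial case analysis required.
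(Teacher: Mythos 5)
Your reduction is exactly the one the paper sketches in its footnote in Section~\ref{sec-prob} (scale each element by $n$ and add $n/4$ ones), and your verification of both directions, including the modular argument pinning each part to exactly one $1$, is correct. So this is essentially the paper's approach, just with the correctness details the paper leaves implicit spelled out.
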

\fi

We now prove Theorem~\ref{thm-main}.

\begin{proof}[of Theorem~\ref{thm-main}]
We prove that the reduction described in the previous section
is indeed a valid reduction from \textsc{5-Partition}.
The reduction clearly runs in polynomial time. We are left to prove that the instance of \textsc{$k$-Piece Dissection} produced by the reduction is a \textsc{yes} instance if and only if the input \textsc{5-Partition} is also a \textsc{yes} instance.

\paragraph{(\textsc{5-Partition}$\implies$\textsc{$k$-Piece Dissection})} Suppose that the \textsc{5-Partition} instance is a \textsc{yes} instance. Given a \textsc{5-Partition} solution, we can cut all but the first element rectangle off the bar and pack them in the partition rectangles according to the \textsc{5-Partition} solution. The piece containing the first element rectangle must be placed at the very left of the first partition rectangle, but we can reorder the partitions in the \textsc{5-Partition} solution so that the first element is in the first partition. As a result, the \textsc{$k$-Piece Dissection} instance is also a \textsc{yes} instance.

\paragraph{(\textsc{$k$-Piece Dissection}$\implies$\textsc{5-Partition})} Suppose that the \textsc{$k$-Piece Dissection} instance is a \textsc{yes} instance, i.e., $P$ can be cut into $k$ pieces that can then be placed into $Q$. By Lemma~\ref{lem-one-per-rec}, no two trimmed element rectangles are in the same piece. Because there are $n = k$ such rectangles, each piece contains exactly one whole trimmed element rectangle. Because these pieces can be packed into $Q$, we must also be able to pack all the trimmed element rectangles into $Q$ (with some space in $Q$ left over).

Let $B_i$ be the set of all trimmed element rectangles (in the packing configuration) that intersect the $i$th partition rectangle. From our choice of $d_t$, each trimmed element rectangle can intersect with at most one partition rectangle. Moreover, no trimmed element rectangles fit entirely in the bar area, so each of them must intersect with at least one partition rectangle. This means that $B_1, \dots, B_{n/5}$ is a partition of the set of all trimmed element rectangles. Let $A_i$ be the set of all integers in $A$ corresponding to the trimmed element rectangles in $B_i$. Observe that $A_1, \dots, A_{n/5}$ is a partition of $A$.

We claim that $A_1, \dots, A_{n/5}$ is indeed a solution for \textsc{5-Partition}. Assume for the sake of contradiction that $A_1, \dots, A_{n/5}$ is not a solution, that is, $\sum_{a \in A_i} a \ne p$ for some $i$. Because $\sum_{a \in A} a = p(n/5)$, there exists $j$ such that $\sum_{a \in A_j} a > p$. Because all $a \in A$ are integers and $p$ is an integer, $\sum_{a \in A_j} a \geq p + 1$.

Consider the $j$th partition rectangle. Define the \emph{extended partition rectangle} as the area that includes a partition rectangle, the bar area directly below it, and the bar $\delta/2$ to the left and to the right of the partition rectangle. Figure~\ref{fig-extendedpartrec} illustrates an extended partition rectangle.

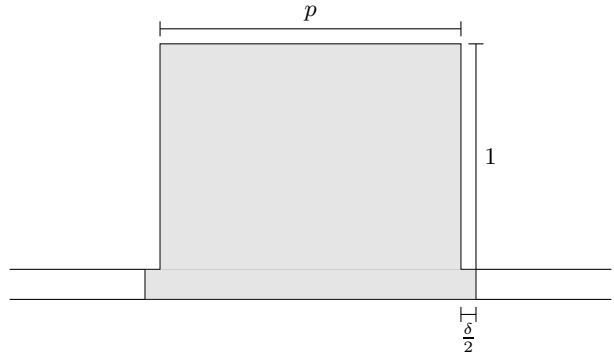
\begin{figure}[h] 
\centering
\begin{tikzpicture}
\draw (1,1)--(3,1)--(3,4)--(7,4)--(7,1)--(9,1);
\draw (1,0.6)--(9,0.6);
\draw (2.8,0.6)--(2.8,1);
\draw (7.2,0.6)--(7.2,1);
\path [fill=gray, opacity=0.2] (3,1) rectangle (7,4);
\path [fill=gray, opacity=0.2] (2.8,0.6) rectangle (7.2,1);
\draw (3,4.2)--(7,4.2);
\draw (3,4.3)--(3,4.1);
\draw (7,4.3)--(7,4.1);

\draw (7.2,1)--(7.2,4);
\draw (7.1,4)--(7.3,4);

\draw (7,0.4)--(7.2,0.4);
\draw (7,0.3)--(7,0.5);
\draw (7.2,0.3)--(7.2,0.5);
\node[right] at (7.2,2.5) {$1$};
\node[above] at (5,4.2) {$p$};
\node[below] at (7.1,0.4) {$\frac{\delta}{2}$};
\end{tikzpicture}
\caption{The shaded area is the extended partition rectangle corresponding to this partition rectangle. \label{fig-extendedpartrec}}
\end{figure}

Consider any trimmed element rectangle in the packing configuration that intersects with this partition rectangle. We claim that each such trimmed element rectangle must be wholly contained in the extended partition rectangle.

Consider the area of the trimmed element rectangle outside the partition rectangle and the bar below it. If this is not empty, this must be a right triangle with hypotenuse on the extension down to the bar of a vertical side of the partition rectangle (see Figure~\ref{fig-trimmedinextended}). The hypotenuse of this triangle is of length at most $\delta$, so the height of the triangle (perpendicular to the hypotenuse) is at most $\delta/2$. Thus, the triangle must be in the extended partition rectangle. Thus the whole trimmed element rectangle must be in the extended partition rectangle, as claimed.

\begin{figure}[h] 
\centering
\begin{tikzpicture}[scale=0.7]
\draw (4,2)--(6,2)--(6,7)--(12,7)--(12,2)--(14,2);
\draw (4,1.2)--(14,1.2);
\draw (6,1.2)--(6,2);
\draw[fill=gray, opacity=0.2] (7,1.2) -- ++(165:1.17) -- ++(75:5) -- ++(345:1.17) -- ++(255:5);
\draw (7,1.2) -- ++(165:1.17) -- ++(75:5) -- ++(345:1.17) -- ++(255:5);
\draw (6,7.4)--(12,7.4);
\draw (6,7.6)--(6,7.2);
\draw (12,7.6)--(12,7.2);

\draw (12.4,2)--(12.4,7);
\draw (12.2,7)--(12.6,7);

\draw (14.4,2)--(14.4,1.2);
\draw (14.2,2)--(14.6,2);
\draw (14.2,1.2)--(14.6,1.2);
\node[above] at (10,7.4) {$p$};
\node[right] at (12.4,4.5) {$1$};
\node[right] at (14.4,1.6) {$\delta$};
\end{tikzpicture}
\caption{A configuration where the trimmed element rectangle is outside of the partition rectangle and the bar below it. The shaded area is the trimmed element rectangle. The area of the trimmed element rectangle outside of the partition rectangle and the bar directly below it is a right triangle with hypotenuse on the extension of a vertical edge of the partition rectangle. This extension is of length $\delta$. \label{fig-trimmedinextended}}
\end{figure}
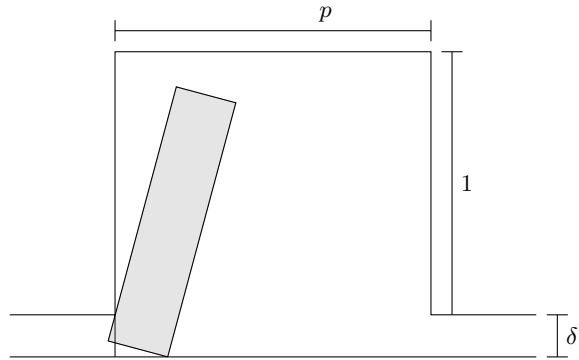

The area of the extended partition rectangle is $p + p\delta + \delta^2 < p + 1/2$. However, the total area of the trimmed element rectangles contained in this area is $\sum_{a \in A_j} a (1 - 4\delta) = \sum_{a \in A_j} a - 4\delta\sum_{a \in A_j}a \geq (p + 1) - 4\delta\sum_{a \in A_j}a > p + 1/2$, which is a contradiction.

Thus we conclude that $A_1, \dots, A_{n/5}$ is a solution to \textsc{5-Partition}, which implies that the \textsc{5-Partition} instance is a \textsc{yes} instance as desired.
%
\end{proof}

\section{Proof of Inapproximability of \textsc{Min Piece Dissection}} \label{sec-proof-approx}

In this section, we show the inapproximability of \textsc{Min Piece Dissection} via a reduction from the intermediate problem \textsc{Max 5-Partition}.
In Section~\ref{subsec-approx-m5p} below, we prove the following
inapproximability result for \textsc{Max 5-Partition}:

\begin{lemma} \label{lem-5-part-approx}
	There is a constant $\alpha_{\MFP} > 1$ such that \textsc{Gap}\textsubscript{\textsc{Max-5-Partition}}$[n(1-\varepsilon)/5, n(1/\alpha_{\MFP} + \varepsilon)/5]$ is NP-hard for any sufficiently small constant $\varepsilon>0$.\footnote{The best $\alpha_{\MFP}$ we can achieve here is 216/215.}
\end{lemma}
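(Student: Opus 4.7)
The plan is to reduce from the Hazan--Safra--Schwartz inapproximability of \textsc{4-Uniform 4-Dimensional Matching} (\textsc{4DM}), which provides a constant $\alpha_{\FDM} > 1$ and NP-hardness of distinguishing \textsc{4DM} instances $H = (V^1, V^2, V^3, V^4, E)$ with $|V^j| = n_0$ whose maximum matching has size at least $(1-\varepsilon)n_0$ from those of maximum matching at most $(1/\alpha_{\FDM} + \varepsilon)n_0$. By standard degree-reduction preprocessing, I may additionally assume the instance has bounded vertex degree, giving $|E| = O(n_0)$.

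Given such an instance, I construct a \textsc{Max 5-Partition} instance as follows. Using a large base $B$ polynomial in $n_0$, each vertex $v^j_i$ becomes a vertex integer $x_{v^j_i}$ whose base-$B$ digit expansion places a $1$ in a unique ``vertex slot'' and in a ``part slot'' indexed by $j$. For each edge $e = (v_1, v_2, v_3, v_4) \in E$, I create an edge integer $y_e$ chosen so that $x_{v_1} + x_{v_2} + x_{v_3} + x_{v_4} + y_e$ equals a prescribed target $p$. Finally I include two kinds of filler integers: \emph{absorption fillers}, designed so that four of them combine with a single leftover edge integer to sum to $p$, and \emph{pure fillers}, five of which sum to $p$. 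The filler counts are calibrated so that the total integer count is a multiple of $5$ and the forced partition target $5(\sum a)/n$ equals my prescribed $p$.

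By uniqueness of base-$B$ digit representations, the only 5-subsets summing to $p$ are (i) \emph{edge groups} $\{x_{v_1}, x_{v_2}, x_{v_3}, x_{v_4}, y_e\}$ for $e \in E$, (ii) edge-absorption groups pairing one leftover edge integer with four absorption fillers, and (iii) pure filler groups. Any 5-partition therefore decomposes into edge groups (whose chosen edges are vertex-disjoint, hence form a matching in $H$) together with absorption and filler groups for everything else. A direct count shows the \textsc{Max 5-Partition} optimum equals $(4M + |E| + F)/5$, where $M$ is the maximum matching of $H$ and $F$ is the pure filler count; this attains $n/5$ exactly when $M = n_0$. Substituting the \textsc{4DM} gap endpoints yields a \textsc{Max 5-Partition} gap $\bigl(n(1-\varepsilon)/5,\, n(1/\alpha_{\MFP} + \varepsilon)/5\bigr)$ with $\alpha_{\MFP} = (|E| + F + 4n_0)/(|E| + F + 4n_0/\alpha_{\FDM}) > 1$. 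Minimizing $|E| + F$ relative to $n_0$ (by the bounded-degree assumption and parsimonious filler choice) maximizes $\alpha_{\MFP}$, and the best achievable value with this reduction is the footnoted $216/215$.

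The main obstacle is establishing the no-spurious-sum property: that no 5-subset beyond the three declared types sums to $p$. This requires a case analysis on the number of each integer type in a candidate 5-subset, leveraging that each base-$B$ digit position of $p$ is attained by exactly one legitimate combination, while keeping $B$ polynomial in $n_0$ so the reduction remains polynomial-time. A secondary subtlety is calibrating the absorption gadget so that every leftover edge can always be packed into some legitimate 5-group, \emph{without} creating new spurious ways to sum to $p$; this trade-off between filler flexibility and gap tightness, combined with the underlying $\alpha_{\FDM}$ from Hazan--Safra--Schwartz, is what ultimately fixes the constant $216/215$.
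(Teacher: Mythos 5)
Your overall route is the same as the paper's: reduce from the Hazan--Safra--Schwartz gap for \textsc{4DM} to \textsc{Max 5-Partition} via a numeric encoding in which an edge integer plus the integers of its four endpoints hits the target sum, so that target-achieving 5-subsets built from genuine vertex integers correspond to a matching. The genuine gap is your absorption gadget, which is exactly the load-bearing part of the argument and is left unresolved. As described, the absorption fillers are generic (``four of them combine with a single leftover edge integer to sum to $p$''), but $y_e = p - x_{v_1} - x_{v_2} - x_{v_3} - x_{v_4}$ carries the vertex-specific low-order digits of $e$'s endpoints, so four fillers can complete $y_e$ to $p$ only if they reproduce those same vertex digits, i.e., only if they are vertex-specific. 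If instead you make them edge-specific (or plentiful enough) so that, as you write, ``every leftover edge can always be packed into some legitimate 5-group,'' then the \textsc{Max 5-Partition} optimum is (near-)$n/5$ regardless of the matching size and soundness fails---the gap collapses. The crux is that absorption must succeed for leftover edges exactly to the extent that vertices are matched, and your sketch never pins down a design achieving that.

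The paper's resolution is the piece you are missing: use the structure of the HSS instances directly, where $|E| = 4n'$ and each vertex lies in exactly four edges (no separate ``degree-reduction preprocessing'' is needed, and invoking one as a black box risks destroying the near-perfect completeness you rely on), and give each vertex exactly \emph{three} vertex-specific dummy copies carrying the same low-order vertex digit as its matching element but a different part-digit pattern ($M^7+M^6+M^5+M^4$ versus $4M^{m+3}$). The digit analysis then shows a target-achieving 5-subset consists of one edge element together with either the four matching elements or four dummy elements of that edge's endpoints, never a mixture. Consequently a matched vertex's three dummies exactly serve its three non-matching edges, each unmatched vertex forfeits one potential subset, and the total number of dummy subsets is at most $3n'$; these two facts give the completeness and soundness directions and fix $1/\alpha_{\MFP} = 3/4 + 1/(4\alpha_{\FDM}) = 215/216$, i.e., $\alpha_{\MFP} = 216/215$---a constant your proposal asserts but cannot derive, since it comes precisely from this three-dummies-per-degree-four-vertex calibration. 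Pure fillers are unnecessary (the element count $20n'$ is already a multiple of five) and, being achievable independently of the matching, would only dilute the gap.
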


Lemma~\ref{lem-5-part-approx} implies that it is hard to approximate \textsc{Max 5-Partition} to within an $\alpha_{\MFP} - \varepsilon$ ratio for
any sufficiently small $\varepsilon>0$.
The proof of inapproximability of \textsc{Max 5-Partition} largely relies on the reduction used to prove NP-hardness of \textsc{4-Partition} in~\cite{Garey-Johnson-1979},
so we defer the proof of this lemma to \ifabstract Appendix\else Subsection\fi~\ref{subsec-approx-m5p}.
Here we focus on the gap preservation of the reduction, which
implies Theorem~\ref{thm-main-approx}.

\begin{lemma} \label{lem-main-approx}
	There is a constant $\alpha_{\MPD} > 1$ such that the following properties hold for the reduction described in Section~\ref{sec-reduction}:
	\begin{itemize}
		\item if the optimum of the \textsc{Max 5-Partition} instance is at least $n(1-\varepsilon)/5$, then the optimum of the resulting \textsc{Min Piece Dissection} instance is at most $n(1+\varepsilon/5)$; and
		\item if the optimum of the \textsc{Max 5-Partition} instance is at most $n(1/\alpha_{\MFP} + \varepsilon)/5$, then the optimum of the resulting \textsc{Min Piece Dissection} is at least $n(\alpha_{\MPD} + \varepsilon/5)$.
	\end{itemize}
\end{lemma}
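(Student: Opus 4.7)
}
I would prove the two bullets separately, in each case setting up a quantitative translation between Max 5-Partition solutions and dissections produced by the reduction of Section~\ref{sec-reduction}.

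For the first bullet (forward direction), given a partial 5-Partition solution with $m' \geq n(1-\varepsilon)/5$ groups, I would construct an explicit dissection of $P$. Reorder the groups so that $a_1$ lies in the first group, and keep the bar together with $a_1$ as a single connected piece (saving one cut relative to cutting every element off). Cut every other grouped element off the bar and place it in its assigned partition rectangle. For the $n - 5m'$ ungrouped elements, fill the $n/5 - m'$ remaining partition rectangles greedily: since $a_i < p$, each element overlaps at most two consecutive remaining rectangles, so at most $n/5 - m' - 1$ additional vertical cuts suffice to fill every remaining partition rectangle exactly. Counting pieces yields a dissection with at most $n + (n/5 - m') - 1 \leq n(1 + \varepsilon/5)$ pieces, as required.

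For the second bullet (backward direction), given any $k$-piece dissection of $P$ packing into $Q$, I would extract a partial 5-Partition as follows. Call a partition rectangle of $Q$ \emph{good} if in the packing it contains exactly five whole trimmed element rectangles; by the extended-partition-rectangle argument from the proof of Theorem~\ref{thm-main}, the five element widths inside a good rectangle must sum to exactly $p$, so every good rectangle furnishes a valid and distinct 5-Partition group. Letting $m'$ be the number of good rectangles and $b = n/5 - m'$ the number of bad ones, $m'$ is a lower bound on the Max 5-Partition optimum of the instance. The crux is the ``at least one extra piece per bad rectangle'' estimate
\begin{equation*}
  k \;\geq\; n + b,
\end{equation*}
which rearranges to $m' \geq n/5 - (k - n)$. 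Combined with the hypothesis $k < n(\alpha_{\MPD} + \varepsilon/5)$, this gives $m' > n\bigl(6/5 - \alpha_{\MPD} - \varepsilon/5\bigr)$, which exceeds $n(1/\alpha_{\MFP} + \varepsilon)/5$ provided $\alpha_{\MPD} < 6/5 - 1/(5 \alpha_{\MFP})$ up to the lower-order $\varepsilon$ slack. With $\alpha_{\MFP} = 216/215$ supplied by Lemma~\ref{lem-5-part-approx}, the threshold becomes $1081/1080 = 1 + 1/1080$, recovering the announced $\alpha_{\MPD}$.

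The main obstacle is the piece-count bound $k \geq n + b$. My plan is to use Lemma~\ref{lem-one-per-rec} to partition the dissection into \emph{full} pieces (containing one entire trimmed rectangle), \emph{fractional} pieces (containing a strict subregion of one trimmed rectangle), and \emph{pure-bar} pieces (no trimmed content), assigning each full or fractional piece to the unique extended partition rectangle in $Q$ containing its trimmed content. Each good rectangle is incident to exactly five full pieces. For each bad rectangle I would argue that the integrality of the element widths together with $a_i \in (p/6, p/4)$ forces at least a sixth assigned piece: either fewer than five whole trimmed rectangles sit inside (so a fractional piece is needed to cover the width deficit) or five whole rectangles sit inside but sum to at most $p-1$ (leaving a remaining area of at least one, filled by at least one further fractional or pure-bar-in-rectangle piece). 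Summing gives $k \geq 5m' + 6b = n + b$. The delicate step is the amortized bookkeeping that rules out two bad rectangles ``sharing'' a split element cheaply: using the large spacing $d_t$ to unambiguously assign fractional pieces to a single extended partition rectangle, and the width bound $a_i < p/4$ together with the $\delta$-thin bar to prevent pure-bar pieces from cheaply covering rectangle-scale gaps, one verifies that each side of a shared split still contributes its own sixth piece to its bad rectangle, so no global saving occurs below the $n + b$ threshold.
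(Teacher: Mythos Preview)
Your forward direction is essentially the paper's: cut all element rectangles but the first off the bar, then make at most $n/5-m'$ extra vertical cuts to greedily fill the $n/5-m'$ unassigned partition rectangles; the piece count $\le n(1+\varepsilon/5)$ follows. (One wrinkle: you cannot always ``reorder the groups so that $a_1$ lies in the first group'', since $a_1$ need not belong to any of the $m'$ groups; the paper handles this by simply leaving the bar piece in place and letting $a_1$ fall among the greedily packed leftovers if necessary.)

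The backward direction has a genuine gap. Your key inequality $k\ge n+b$, obtained from ``each bad partition rectangle carries at least six assigned pieces'', is false even under your extra hypothesis $a_i\in(p/6,p/4)$ (which, incidentally, is not part of the paper's \textsc{5-Partition} definition). Take $n=10$, two partition rectangles, and suppose no five of the $a_i$ sum to~$p$. Cut $a_2,\dots,a_9$ off the bar and split $a_{10}$ in two; pack $a_1,\dots,a_4$ plus one part of $a_{10}$ in the first rectangle and $a_5,\dots,a_9$ plus the other part in the second. Both rectangles are bad by your definition, so $b=2$, yet $k=11<n+b=12$: the first bad rectangle receives only four whole plus one fractional piece---five, not six. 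The point is that one split trimmed rectangle produces two fractional pieces and can make \emph{two} partition rectangles bad while adding only \emph{one} to $k-n$; no local bookkeeping recovers the missing sixth piece. What does hold is $k\ge n+b/2$ (since $b$ is at most the number of fractional pieces, and each of the $\le k-n$ split trimmed rectangles contributes at least two of them), which still proves the lemma, just with a weaker~$\alpha_{\MPD}$.

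The paper organises the count from the source side rather than per target rectangle. Since each piece meets at most one trimmed rectangle (Lemma~\ref{lem-one-per-rec}), at most $k-n$ trimmed rectangles in $P$ are split, so at least $2n-k$ pieces carry a \emph{whole} trimmed rectangle; a partition rectangle is then declared good when it meets none of the remaining fractional~$R'_i$. Crucially, the paper never tries to show bad rectangles contain six pieces. Instead it proves directly, by two area arguments (the extended-partition-rectangle bound for $\sum>p$, and the fact that the total non-trimmed source area is $<1$ for $\sum<p$), that in every good rectangle the whole trimmed rectangles it contains have widths summing to exactly~$p$. That pair of area estimates is the mechanism replacing your per-rectangle piece-count lower bound.
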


Because it is NP-hard to distinguish the two cases of the input \textsc{Max 5-Partition} instance, it is also NP-hard to approximate \textsc{Min Piece Dissection} to within an $\alpha_{\MPD} - \varepsilon$ ratio for any
sufficiently small constant $\varepsilon>0$.
Thus, Lemma~\ref{lem-main-approx} immediately implies Theorem~\ref{thm-main-approx}.
It remains to prove Lemma~\ref{lem-main-approx}:

\begin{proof}[of Lemma~\ref{lem-main-approx}]
	We will show that both properties are true when we choose $\alpha_{\MPD}$ to be $1 + (1 - 1/\alpha_{\MFP})/5$.

	\paragraph{(\textsc{Max 5-Partition}$\implies$\textsc{Min Piece Dissection})} Suppose that the input \textsc{Max 5-Partition} instance has optimum at least $n(1-\varepsilon)/5$. Let $A_1, \dots, A_{m'}$ be the optimal partition where $m' \geq n(1-\varepsilon)/5$. We cut $P$ into pieces as follows:
	\begin{enumerate}
		\item First, we cut every element rectangle except the first one from the bar.
		\item Next, let the indices of the elements in $A - (A_1 \cup A_2 \cup \cdots \cup A_{m'})$ be $i_1, \dots, i_{l}$ where $1 \leq i_1 < i_2 < \cdots < i_{l} \leq n$.
		\item For each $i = 1, \dots, n/5 - m'$, let $j$ be the smallest index such that $a_{i_1} + \cdots + a_{i_j} \geq ip$. Cut the piece corresponding to $a_{i_j}$ vertically at position $ip - \left(a_{i_1} + \cdots + a_{i_{j - 1}}\right)$ from the left. (If the intended cut position is already the right edge of the piece, then do nothing.)
	\end{enumerate}

	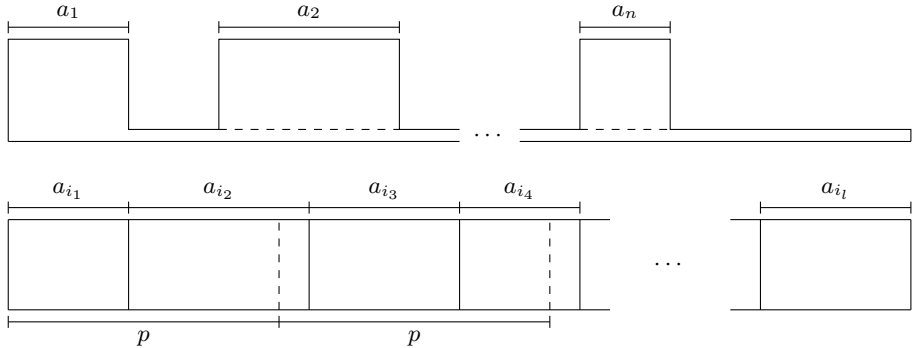
\begin{figure}[h] 
	\centering
	\begin{tikzpicture}[scale=0.8]
	\draw (0,3.8)--(0,5.5)--(2,5.5)--(2,4)--(3.5,4)--(3.5,5.5)--(6.5,5.5)--(6.5,4)--(7.5,4);
	\draw (8.5,4)--(9.5,4)--(9.5,5.5)--(11,5.5)--(11,4)--(11.5,4)--(15,4)--(15,3.8);
	\draw (0,3.8)--(7.5,3.8);
	\draw (8.5,3.8)--(15,3.8);
	\draw (0,5.7)--(2,5.7);
	\draw (0,5.6)--(0,5.8);
	\draw (2,5.6)--(2,5.8);
	\draw (3.5,5.7)--(6.5,5.7);
	\draw (3.5,5.6)--(3.5,5.8);
	\draw (6.5,5.6)--(6.5,5.8);
	\draw (9.5,5.7)--(11,5.7);
	\draw (9.5,5.6)--(9.5,5.8);
	\draw (11,5.6)--(11,5.8);
	\draw[dashed] (3.5,4)--(6.5,4);
	\draw[dashed] (9.5,4)--(11,4);
	\node[above] at (1,5.7) {$a_1$};
	\node[above] at (5,5.7) {$a_2$};
	\node[above] at (10.25,5.7) {$a_n$};
	\node at (8,3.9) {$\cdots$};

	\draw (0,1)--(10,1);
	\draw (12,1)--(15,1);
	\draw (0,2.5)--(10,2.5);
	\draw (12,2.5)--(15,2.5);
	\draw (0,1)--(0,2.5);
	\draw (2,1)--(2,2.5);
	\draw (5,1)--(5,2.5);
	\draw (7.5,1)--(7.5,2.5);
	\draw (9.5,1)--(9.5,2.5);
	\draw (12.5,1)--(12.5,2.5);
	\draw (15,1)--(15,2.5);
	\draw[dashed] (4.5,1)--(4.5,2.5);
	\draw[dashed] (9,1)--(9,2.5);

	\draw (0,2.7)--(9.5,2.7);
	\draw (12.5,2.7)--(15,2.7);
	\draw (0,0.8)--(9,0.8);
	\draw (0,2.6)--(0,2.8);
	\draw (2,2.6)--(2,2.8);
	\draw (5,2.6)--(5,2.8);
	\draw (7.5,2.6)--(7.5,2.8);
	\draw (9.5,2.6)--(9.5,2.8);
	\draw (12.5,2.6)--(12.5,2.8);
	\draw (15,2.6)--(15,2.8);
	\draw (0,0.9)--(0,0.7);
	\draw (4.5,0.9)--(4.5,0.7);
	\draw (9,0.9)--(9,0.7);

	\node[above] at (1, 2.7) {$a_{i_1}$};
	\node[above] at (3.5, 2.7) {$a_{i_2}$};
	\node[above] at (6.25, 2.7) {$a_{i_3}$};
	\node[above] at (8.5, 2.7) {$a_{i_4}$};
	\node[above] at (13.75, 2.7) {$a_{i_l}$};
	\node[below] at (2.25,0.8) {$p$};
	\node[below] at (6.75,0.8) {$p$};
	\node at (11,1.75) {$\cdots$};
	\end{tikzpicture}
	\caption{An illustration of how the source polygon $P$ is cut. The cuts from step 1 are shown as dashed lines on the top figure; every element rectangle except the first one is cut from the bar. On the bottom, the cuts from step 3 are demonstrated. We can think of the cutting process as first arranging $a_{i_1}, \dots, a_{i_l}$ consecutively and then cutting at $p, 2p, \dots$. \label{fig-cut}}
	\end{figure}

	To pack these pieces into $Q$, we arrange all pieces whose corresponding elements are in partitions in the optimal \textsc{Max 5-Partition} solution, then pack the remaining pieces into the remaining partition rectangles using the additional cuts made in step 3.  We leave the piece containing the first element rectangle (and the bar) at its position in $P$, but this does not constrain our solution because the other pieces and the partitions can be freely reordered.

	The number of cuts in step 1 is $n - 1$ and in step 3 is at most $n/5 - m' \leq \varepsilon n/5$. Thus the total number of cuts is at most $n - 1 + \varepsilon n / 5$, so the number of pieces is at most $1 + (n - 1 + \varepsilon n / 5) = n(1 + \varepsilon/5)$ as desired.

	\paragraph{(\textsc{Min Piece Dissection}$\implies$\textsc{Max 5-Partition})} We prove this property in its contrapositive form. Suppose that the resulting \textsc{Min Piece Dissection} has an optimum of $k < n(\alpha_{\MPD} + \varepsilon/5)$. Let us call these $k$ pieces $R_1, \dots, R_k$.

	For each $i = 1, \dots, k$, let $R'_i$ denote the intersection between $R_i$ with the union of all trimmed element rectangles. By Lemma~\ref{lem-one-per-rec}, each trimmed element rectangle can intersect with only one piece. This means that each $R'_i$ is a part of a trimmed element rectangle. (Note that $R'_i$ can be empty; in this case, we say that it belongs to the first trimmed element rectangle.)

	Consider $R'_1, \dots, R'_k$. Because each of them is a part of a trimmed rectangle and there are $n$ trimmed rectangles, at most $k - n$ trimmed rectangles contain more than one of the $R'_i$. In other words, there are at least $n - (k - n) = 2n - k$ indices $i$ such that $R'_i$ is a whole trimmed element rectangle. Without loss of generality, suppose that $R'_1, \dots, R'_{2n-k}$ are entire trimmed element triangles. 

	We call a partition rectangle a \emph{good partition rectangle} if it does not intersect with any of $R'_{2n-k+1}, \dots, R'_{n}$ in the packing configuration. From our choice of $d_t$, each $R'_i$ which is part of a trimmed element rectangle can intersect with at most one partition rectangle. As a result, there are at least $n/5 - (k - n)$ good partition rectangles.

	For each good partition rectangle $O$, let $A_O$ be the subset of all elements of $A$ corresponding to $R'_i$s that intersect $O$. (Because $O$ is a good partition rectangle, each $R'_i$ that intersects $O$ is always a whole trimmed element rectangle.)

	We claim that the collection of $T_O$'s for all good partition rectangles $O$ is a solution to the \textsc{Max 5-Partition} instance. We will show that this is indeed a valid solution. First, observe again that, because each $R'_i$ intersects with at most one partition rectangle, all $A_O$'s are mutually disjoint. Thus, we now only need to prove that the sum of elements of $A_O$ is exactly the target sum $p$.

	Suppose for the sake of contradiction that there exists a good partition rectangle $O$ such that $\sum_{a \in A_O} a \ne p$. Consider the following two cases.
	\begin{description}
		\item[Case 1:] $\sum_{a \in A_O} a > p$.
\medskip

		As we showed in the proof of Theorem~\ref{thm-main}, each trimmed element rectangle corresponding to $a \in A_O$ must be in the extended partition rectangle.  By an argument similar to the argument used in the proof of Theorem~\ref{thm-main}, the total area of all these trimmed element rectangles is more than the area of the extended partition rectangle, which is a contradiction.
\medskip
		\item[Case 2:] $\sum_{a \in A_O} a < p$.
\medskip

		Because every $a \in A_O$ and $p$ are integers, $\sum_{a \in A_O} a + 1 \leq p$. From the definition of $A_O$, no trimmed element rectangles apart from those in $A_O$ intersect $O$. Hence the total area that trimmed element rectangles contribute to $O$ is at most $$\left(\sum_{a \in A_O}a\right)(1 - 4\delta) < \sum_{a \in A_O}a \leq p - 1.$$ This means that an area of at least $1$ unit square in $O$ is not covered by any of the trimmed element rectangles. However, the area of the source polygon outside of all the trimmed element rectangles is $$\delta\left(\left(\frac{n}{5} - 1\right)d_t + \sum_{a \in A} a\right) + 4 \delta\left(\sum_{a \in A} a\right) < 1,$$ which is a contradiction.
	\end{description}

	Hence, the solution defined above is a valid solution. Because the number of good partition rectangles is at least $n/5 - (k - n) > n/5 - n(\alpha_{\MPD} + \varepsilon/5 - 1) = n(1/\alpha_{\MFP} - \varepsilon)/5$, the solution contains more than $n(1/\alpha_{\MFP} - \varepsilon)/5$ subsets, which completes the proof of the second property.
\end{proof}

\later{
\ifabstract
	\section{Proof of NP-hardness of Approximation of \textsc{Max 5-Partition}}
\else
	\section{Proof of NP-hardness of Approximation of \textsc{Max 5-Partition}}
\fi
\label{subsec-approx-m5p} 

To prove hardness of approximating \textsc{Max 5-Partition}, we will reduce from \textsc{4-Uniform 4-Dimensional Matching}. 

\textsc{2-Uniform 3-Dimensional Matching}, a problem similar to \textsc{4-Uniform 4-Dimensional Matching}, was first proved to be hard to approximate by Chleb{\'{\i}}k and Chleb{\'{\i}}kov{\'{a}}~\cite{Chlebik-Chlebikova-2002}. Shortly after, \textsc{4-Uniform 4-Dimensional Matching} was proved to be NP-hard to approximate by Berman and Karpinski~\cite{Berman-Karpinski-2003}. These two results, however, are not strong enough for our proof.\footnote{Our proof requires the \emph{near perfect completeness} property. This property refers to the value $\beta$ for which \textsc{Gap}\textsubscript{\textsc{4DM}}[$\beta, \gamma$] is NP-hard. To have near perfect completeness, $\beta$ needs to correspond to an almost perfect matching.} For the purpose of this paper, we will use the following inapproximability result for \textsc{4-Uniform 4-Dimensional Matching} by Hazan, Safra and Schwartz~\cite{Hazan-Safra-Schwartz-2003}:

\begin{theorem}[\cite{Hazan-Safra-Schwartz-2003}] \label{thm-match}
There is a constant $\alpha_{\FDM} > 1$ such that, for
any sufficiently small constant $\varepsilon>0$,
\textsc{Gap}\textsubscript{\textsc{4DM}}$[n'(1 - \varepsilon), n'(1/\alpha_{\FDM} + \varepsilon)]$ is NP-hard where $n'$ denote $|V^1| = |V^2| = |V^3| = |V^4|$.\footnote{In~\cite{Hazan-Safra-Schwartz-2003}, $\alpha_{\FDM}$ is 54/53.}
\end{theorem}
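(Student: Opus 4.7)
The plan is to follow the standard PCP-based template used for hardness of approximation of packing and matching problems. The first step is to start from a gap version of Label Cover with near-perfect completeness. Such an instance can be obtained by taking a hard base PCP (for instance, one arising from H\r{a}stad's analysis of 3SAT or a low-degree-test construction) and boosting its soundness via Raz's parallel repetition theorem. Near-perfect completeness is essential for this application, because we want the ``yes'' side of the gap to be $\beta=n'(1-\varepsilon)$, which corresponds to an almost-perfect matching rather than merely a matching of constant density.

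The second step is to apply a long-code reduction to convert the Label Cover instance into a 4-uniform 4-partite hypergraph $H=(V^1,V^2,V^3,V^4,E)$. For each Label Cover vertex I would introduce a block of $2^{|\Sigma|}$ hypergraph vertices indexed by subsets of the label alphabet, placing one copy of each block into each of the four sides $V^1,\dots,V^4$. A 4-tuple is declared to be a hyperedge exactly when the corresponding four Boolean tables pass a carefully chosen 4-query dictatorship test that is satisfied by dictator functions whose labels are consistent across a Label Cover constraint. Completeness is then the easy direction: a satisfying Label Cover assignment induces, on each side, the collection of dictator coordinates for the chosen labels, and these assemble into a matching of size at least $(1-\varepsilon)n'$.

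The main obstacle is the soundness analysis. From a matching in $H$ of size at least $n'/\alpha_{\FDM}$ one must recover a Label Cover assignment satisfying a non-negligible fraction of the constraints, thereby contradicting the starting hardness. The canonical tool is Fourier-analytic influence decoding on the long codes: the matching edges incident to each Label Cover vertex yield a distribution over Boolean tables, from which one extracts influential coordinates and then averages over Label Cover constraints to produce a consistent labeling. To obtain the explicit threshold $\alpha_{\FDM}=54/53$ rather than an unspecified constant, one has to combine this decoding with a tight combinatorial analysis specific to 4-uniform matchings, which is precisely the technical contribution of Hazan, Safra, and Schwartz and the part I would expect to be the hardest to reproduce from scratch.
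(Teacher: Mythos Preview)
The paper does not prove this theorem at all: it is stated with attribution to Hazan, Safra, and Schwartz~\cite{Hazan-Safra-Schwartz-2003} and used as a black box in the subsequent reduction to \textsc{Max 5-Partition}. No argument is expected or given; the theorem is simply quoted from the literature.

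Your proposal instead sketches how one might \emph{reprove} the cited result from first principles via the Label Cover plus long-code template. That outline is broadly in the right spirit for this family of inapproximability results, but it is orthogonal to what the paper does, and in any case it is not a proof: the soundness analysis and the combinatorics yielding the specific constant $54/53$ are precisely the nontrivial content of~\cite{Hazan-Safra-Schwartz-2003}, and you explicitly defer those to that reference. For the purposes of this paper, the appropriate ``proof'' is simply a citation.
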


We now prove Lemma~\ref{lem-5-part-approx} using Theorem~\ref{thm-match}.

\begin{proof}[of Lemma~\ref{lem-5-part-approx}]
First we describe the reduction from \textsc{4-Uniform 4-Dimensional Matching} to \textsc{Max 5-Partition}. The reduction is similar to the reduction used to prove NP-hardness of \textsc{4-Partition} in~\cite{Garey-Johnson-1979}. Given an instance $(V^1, V^2, V^3, V^4, E)$ of \textsc{4-Uniform 4-Dimensional Matching}, we construct an instance $A = \{a_1, \dots, a_n\}$ of \textsc{Max 5-Partition} as follows:
\begin{enumerate}
	\item Let $n = 20n'$. Because $(V^1, V^2, V^3, V^4, E)$ is 4-uniform and balanced, $|E| = 4|V^1| = 4|V^2| = 4|V^3| = 4|V^4| = 4n'$, or equivalently, $n = 4|V^1| + 4|V^2| + 4|V^3| + 4|V^4| + |E|$.
	\item For convenience, let $M = 100(n' + 1)^2$ and, for each $m = 1, \dots, 4$, label $V^m$'s vertices from 1 to $n'$.
	\item For each vertex $i = 1, \dots, n'$ of $V^m$, add four elements to the set $A$: one \emph{matching element} of value $M^8 + 4M^{m + 3} + M^{m - 1} i$ and three \emph{dummy elements} of value $M^8 + M^7 + M^6 + M^5 + M^4 + M^{m - 1} i$.
	\item For each edge $(i, j, k, l) \in E$, add an element of value $M^8 - M^3 l - M^2 k - M j - i$ into $A$.
\end{enumerate}

In the resulting instance of \textsc{Max 5-Partition}, the target partition sum is $5M^8 + 4M^7 + 4M^6 + 4M^5 + 4M^4$. It is easy to see that this can be met by only two types of subsets: an edge element and the four corresponding matching elements, and an edge element and the four corresponding dummy elements. 

A subset of the first type corresponds to an edge in a matching in the \textsc{4-Uniform 4-Dimensional Matching} instance because there is only one matching element for each vertex $i$ of $V^m$, preventing two subsets from sharing a vertex; we will call these subsets \emph{matching subsets}. We will call subsets of the other type \emph{dummy subsets}.

This reduction clearly runs in polynomial time. Select $\alpha_{\MFP}$ such that $1/\alpha_{\MFP} = 1/(4\alpha_{\FDM}) + 3/4$. We prove the two following properties of the reduction, which together immediately yield Lemma~\ref{lem-5-part-approx}:
\begin{enumerate}
	\item If the optimum of the \textsc{4-Uniform 4-Dimensional Matching} instance is at least $n'(1 - \varepsilon)$, then the optimum of the resulting \textsc{Max 5-Partition} instance is at least $n(1-\varepsilon)/5$.
	\item If the optimum of the \textsc{4-Uniform 4-Dimensional Matching} instance is at most $n'(1/\alpha_{\FDM} + \varepsilon)$, then the optimum of the resulting \textsc{Max 5-Partition} instance is at most $n(1/\alpha_{\MFP} + \varepsilon/4)/5$.
\end{enumerate}

\paragraph{(\textsc{4-Uniform 4-Dimensional Matching}$\implies$\textsc{Max 5-Partition})} We now prove the first property. Suppose that the input \textsc{4-Uniform 4-Dimensional Matching} instance has optimum at least $n'(1 - \varepsilon)$. We create a solution to the \textsc{Max 5-Partition} instance by first taking the matching subsets corresponding to the matching. Then, for each edge not in the matching, we pick the corresponding dummy subset if possible. (We cannot pick it if all the dummy elements corresponding to a vertex in the edge are already taken.) This is our solution for \textsc{Max 5-Partition}.

Consider each vertex $i \in V^m$. If $i$ is in the matching, then there are only three dummy subsets left, so we will not run out of dummy elements for $i$ when we try to pick the dummy subsets. On the other hand, if $i$ is not in the matching, we will run out of dummy elements for one of the dummy subsets.

Because the matching is of size at least $n'(1 - \varepsilon)$, the number of vertices not in the matching is at most $4n'\varepsilon$. Because each such vertex causes at most one dummy subset to not be picked, at most $4n'\varepsilon$ dummy subsets are not picked. Hence, the solution created for the \textsc{Max 5-Partition} instance has at least $n/5 - 4n'\varepsilon = n(1 - \varepsilon)/5$ subsets, which concludes the proof for the first property.

\paragraph{(\textsc{Max 5-Partition}$\implies$\textsc{4-Uniform 4-Dimensional Matching})} We will show the second property using its contrapositive. Suppose that the \textsc{Max 5-Partition} instance has optimum more than $n(1/\alpha_{\MFP} + \varepsilon/4)/5$, i.e., there exist more than $n(1/\alpha_{\MFP} + \varepsilon)/5$ disjoint subsets each having a sum of elements equal to $5M^8 + 4M^7 + 4M^6 + 4M^5 + 4M^4$. Consider the dummy subsets among these subsets.

Because there are only three dummy elements corresponding to each vertex, there are only $3 \cdot 4n' = 12n'$ dummy elements in $A$. Four dummy elements are needed for each dummy subset, so there are at most $12n'/4 = 3n' = 3n/20$ such subsets. As a result, there are more than $n(1/\alpha_{\MFP} + \varepsilon/4)/5 - 3n/20 = n(1/\alpha_{\FDM} + \varepsilon)/20 = n'(1/\alpha_{\FDM} + \varepsilon)$ matching subsets. These subsets correspond to a matching of size more than $n'(1/\alpha_{\FDM} + \varepsilon)$ in the \textsc{4-Uniform 4-Dimensional Matching} instance. Thus, the second property holds.

Thus, the two properties are true and Lemma~\ref{lem-5-part-approx} follows immediately.
\end{proof}
}

\later{
\section{Proof of Lemma~\ref{lem-one-per-rec}} \label{sec-lem-proof}

In this section, we prove Lemma~\ref{lem-one-per-rec}. Before we do so, we prove the following helper lemma.
\begin{lemma} \label{lem-tri-height}
	For any triangle that can be fit into a bar of height $\delta$, at least one of its heights is of length at most $2\delta$.
\end{lemma}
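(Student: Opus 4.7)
The plan is to exploit the thinness of the bar (a strip of height $\delta$) by carefully choosing the vertex whose altitude we bound. After a rigid motion, assume the bar is the horizontal slab $\{(x,y) : 0 \le y \le \delta\}$, so each vertex of the triangle has $y$-coordinate in $[0, \delta]$.

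Sort the three vertices of the triangle by $x$-coordinate and label them $P_a, P_b, P_c$ with $x_a \le x_b \le x_c$. I will bound the altitude from the middle vertex $P_b$ to its opposite side $P_a P_c$. Since $x_a \le x_b \le x_c$, the vertical line through $P_b$ meets the \emph{segment} $P_a P_c$ (not merely its supporting line) at some point $Q$; as $Q$ lies on that segment, $y_Q$ is between $y_a$ and $y_c$, so in particular $y_Q \in [0, \delta]$. Hence $|P_b Q| = |y_b - y_Q| \le \delta$.

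Because the altitude from $P_b$ to line $P_a P_c$ is the perpendicular distance from $P_b$ to that line, and hence the minimum distance from $P_b$ to any point of the line, it is no larger than the specific distance $|P_b Q| \le \delta$, which is well within the claimed $2\delta$. So the altitude from $P_b$ itself already beats the bound.

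The argument is essentially forced once the middle vertex is identified, so I do not anticipate a real obstacle. The only thing worth verifying is that the vertical line through $P_b$ actually crosses the segment $P_a P_c$ rather than its extension; this follows directly from the $x$-coordinate ordering, with the degenerate cases $x_a = x_b$ or $x_b = x_c$ handled by taking $Q = P_a$ or $Q = P_c$ respectively (and the fully collinear case being vacuous since all altitudes are $0$).
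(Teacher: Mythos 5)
Your proof is correct, and it takes a different route from the paper's. The paper argues via area: it takes the leftmost and rightmost vertices $A$ and $B$, encloses the triangle in a sub-strip of the bar of width at most $AB$ and height $\delta$, and compares areas, so that $\tfrac{1}{2}\,AB\cdot h \le \delta\cdot AB$ forces the height $h$ with respect to base $AB$ to be at most $2\delta$. You bound the very same altitude (the one from the vertex that lies between $A$ and $B$ in $x$-coordinate) but do it directly: the vertical segment from the middle vertex down to the segment $AB$ stays inside the slab, so it has length at most $\delta$, and the perpendicular distance can only be smaller. Your argument is more elementary (no area computation) and yields the sharper constant $\delta$ in place of $2\delta$, which of course still suffices for the way the lemma is used later (the contradiction there only needs the bound $2\delta$); the paper's area argument, in exchange, does not need to identify which vertex is the middle one or check the degenerate orderings, at the cost of the factor $2$. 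Your handling of the edge cases ($x_a=x_b$, $x_b=x_c$, and collinearity) is fine.
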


\begin{proof}
	Consider the leftmost and rightmost vertices, denoted $A$ and $B$ respectively, of the triangle in the configuration that fits into the bar. Consider a strip where $A$ is on the left edge, $B$ is on the right edge and the upper and lower edges are on the edges of the bar. This strip contains the whole triangle and its width is at most $AB$, so the strip has area at most $\delta\cdot AB$. The triangle has area equal to half of $AB$ multiplied by the height with respect to base $AB$, so the height with respect to base $AB$ is at most $2\delta$.
\end{proof}

	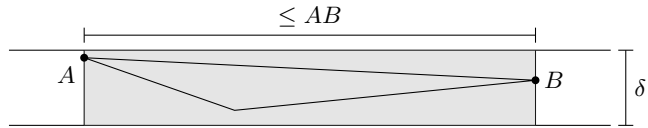
\begin{figure}[h] 
	\centering
	\begin{tikzpicture}
	\draw (1,1)--(9,1);
	\draw (1,0)--(9,0);
	\draw (2,0)--(2,1);
	\draw (8,0)--(8,1);
	\path [fill=gray, opacity=0.2] (2,0) rectangle (8,1);
	\draw (2,0.9)--(4,0.2)--(8,0.6)--(2,0.9);
	\fill (2,0.9) circle (1.5pt);
	\node[below left] at (2,0.9) {$A$};
	\fill (8,0.6) circle (1.5pt);
	\node[right] at (8,0.6) {$B$};
	\draw (9.2,0)--(9.2,1);
	\draw (9.1,0)--(9.3,0);
	\draw (9.1,1)--(9.3,1);
	\node[right] at (9.2,0.5) {$\delta$};

	\draw (2,1.2)--(8,1.2);
	\draw (2,1.1)--(2,1.3);
	\draw (8,1.1)--(8,1.3);
	\node[above] at (5,1.2) {$\leq AB$};
	\end{tikzpicture}
	\caption{A triangle fitted into a bar of height $\delta$. The shaded area is the strip considered in the proof of Lemma~\ref{lem-tri-height}. \label{fig-trianinstrip}}
	\end{figure}

We are now ready to prove Lemma~\ref{lem-one-per-rec}.

\begin{proof}[of Lemma~\ref{lem-one-per-rec}]
	Suppose for the sake of contradiction that there exists a piece $R$ that intersects with two different trimmed element rectangles. Let the indices of the trimmed element rectangles be $i$ and $j$ where $i < j$.

	Define area $T_i$ to be the whole area below the $i$th trimmed element rectangle together with the area of the bar of width $d_s/3$ immediately on the right of the element rectangle. Similarly, define $T_j$ to be the whole area below the $j$th trimmed element rectangle together with the area of the bar of width $d_s/3$ immediately on the left of the element rectangle. $T_i$ and $T_j$ are illustrated in Figure~\ref{fig-trimmedex}.

	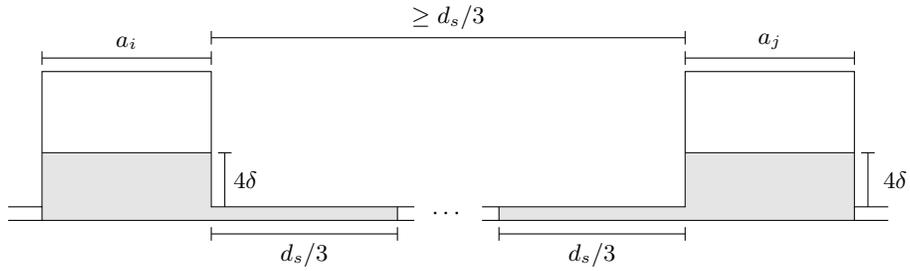
\begin{figure}[h] 
	\centering
	\begin{tikzpicture}[scale=0.9]
	\draw (2.5,1)--(3,1)--(3,3)--(5.5,3)--(5.5,1)--(8.5,1);
	\draw (9.5,1)--(12.5,1)--(12.5,3)--(15,3)--(15,1)--(15.5,1);
	\draw (2.5,0.8)--(8.5,0.8);
	\draw (9.5,0.8)--(15.5,0.8);
	\draw (3,1.8)--(5.5,1.8);
	\draw (12.5,1.8)--(15,1.8);
	
	\path [fill=gray, opacity=0.2] (3,0.8)--(3,1.8)--(5.5,1.8)--(5.5,1)--(8.25,1)--(8.25,0.8);
	\path [fill=gray, opacity=0.2] (9.75,0.8)--(9.75,1)--(12.5,1)--(12.5,1.8)--(15,1.8)--(15,0.8);
	\draw (3,0.8)--(3,1);
	\draw (8.25,1)--(8.25,0.8);
	\draw (9.75,0.8)--(9.75,1);
	\draw (15,0.8)--(15,1);

	\draw (3,3.2)--(5.5,3.2);
	\draw (3,3.3)--(3,3.1);
	\draw (5.5,3.3)--(5.5,3.1);
	\node[above] at (4.25,3.2) {$a_i$};

	\draw (5.5,3.5)--(12.5,3.5);
	\draw (5.5,3.4)--(5.5,3.6);
	\draw (12.5,3.4)--(12.5,3.6);
	\node[above] at (9,3.5) {$\geq d_s/3$};

	\draw (12.5,3.2)--(15,3.2);
	\draw (12.5,3.3)--(12.5,3.1);
	\draw (15,3.3)--(15,3.1);
	\node[above] at (13.75,3.2) {$a_j$};

	\draw (5.7,1)--(5.7,1.8);
	\draw (5.6,1.8)--(5.8,1.8);
	\node[right] at (5.7,1.4) {$4\delta$};

	\draw (15.2,1)--(15.2,1.8);
	\draw (15.1,1.8)--(15.3,1.8);
	\node[right] at (15.3,1.4) {$4\delta$};

	\draw (5.5,0.6)--(8.25,0.6);
	\draw (5.5,0.5)--(5.5,0.7);
	\draw (8.25,0.5)--(8.25,0.7);
	\node[below] at (6.875,0.6) {$d_s/3$};

	\draw (9.75,0.6)--(12.5,0.6);
	\draw (9.75,0.5)--(9.75,0.7);
	\draw (12.5,0.5)--(12.5,0.7);
	\node[below] at (11.125,0.6) {$d_s/3$};

	\node at (9, 0.9) {$\cdots$};
	\end{tikzpicture}
	\caption{$T_i$ and $T_j$ are shown as the shaded areas on the left and on the right respectively. \label{fig-trimmedex}}
	\end{figure}

	We first prove the following claim: in the final packing configuration in $Q$, at least one point in $R \cap T_i$ must be in a partition rectangle.

	Suppose for the sake of contradiction that $R \cap T_i$ is completely contained in the bar. For convenience, we name the points as follows:
	\begin{itemize}
		\item Let $C$ be the lower left corner of the $i$th trimmed element rectangle.
		\item Let $D$ be the lower right corner of the trimmed element rectangle.
		\item Let $E$ be the lower right corner of the $i$th (untrimmed) element rectangle.
		\item Let $F$ be the intersection of the extension of the right edge of the element rectangle with the lower edge of the bar.
		\item Let $G$ be the point on the upper edge of the bar that is distance $d_s/3$ to the right of $E$.
		\item Let $H$ be the point on the lower edge of the bar that is distance $d_s/3$ to the right of $F$.
	\end{itemize}
	These points are marked in Figure~\ref{fig-pointsname}.
	
	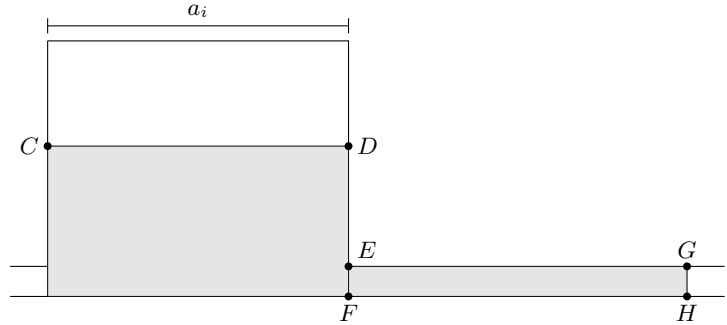
\begin{figure}[h] 
	\centering
	\begin{tikzpicture}
	\draw (2.5,1)--(3,1);
	\draw (3,0.6)--(3,4);
	\draw (3,4)--(7,4)--(7,0.6);
	\draw (7,1)--(12,1);
	\draw (2.5,0.6)--(12,0.6);
	\draw (3,2.6)--(7,2.6);
	\path [fill=gray, opacity=0.2] (3,0.6)--(3,2.6)--(7,2.6)--(7,1)--(11.5,1)--(11.5,0.6);
	\draw (11.5,0.6)--(11.5,1);
	\draw (3,4.2)--(7,4.2);
	\draw (3,4.3)--(3,4.1);
	\draw (7,4.3)--(7,4.1);
	\node[above] at (5,4.2) {$a_i$};
	\fill (3,2.6) circle (1.5pt);
	\node[left] at (3,2.6) {$C$};
	\fill (7,2.6) circle (1.5pt);
	\node[right] at (7,2.6) {$D$};
	\fill (7,1) circle (1.5pt);
	\node[above right] at (7,1) {$E$};
	\fill (7,0.6) circle (1.5pt);
	\node[below] at (7,0.6) {$F$};
	\fill (11.5,1) circle (1.5pt);
	\node[above] at (11.5,1) {$G$};
	\fill (11.5,0.6) circle (1.5pt);
	\node[below] at (11.5,0.6) {$H$};
	\end{tikzpicture}
	\caption{An illustration of the points $C, D, E, F, G$ and $H$. The shaded area is the area $T_i$ defined earlier. \label{fig-pointsname}}
	\end{figure}

	From the assumptions that $R$ contains pieces from both trimmed element rectangles and that $R$ is connected, we know the piece must intersect $\overline{CD}, \overline{GH}$ and $\overline{EF}$. Let the points at which $R$ intersects these line segments be $I, J$ and $K$ respectively. By our assumption that $R \cap T_i$ is completely contained in the bar in the packing configuration, $I, J$ and $K$ must all be in the bar in the final configuration. Let $L$ be the intersection of $\overline{IJ}$ and $\overline{DF}$ and $M$ be the intersection between $\overline{IJ}$ and $\overline{EG}$ (see Figure~\ref{fig-inter}; we will show that $L$ lies on $\overline{DE}$ and thus $\overline{IJ}$ indeed intersects $\overline{EG}$). If $I = D$, let $L$ be $D$. Similarly, if $J = G$, let $M$ be $G$. Because $I$ and $J$ are in the bar in the final configuration by our assumption, and the bar is convex, $L$ and $M$ must also be in the bar.

	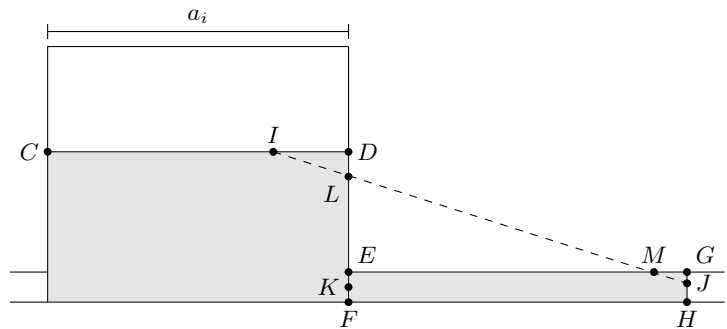
\begin{figure}[h] 
	\centering
	\begin{tikzpicture}
	\draw (2.5,1)--(3,1);
	\draw (3,0.6)--(3,4)--(7,4);
	\draw (7,4)--(7,0.6);
	\draw (7,1)--(12,1);
	\draw (2.5,0.6)--(12,0.6);
	\draw (3,2.6)--(7,2.6);
	\path [fill=gray, opacity=0.2] (3,0.6)--(3,2.6)--(7,2.6)--(7,1)--(11.5,1)--(11.5,0.6);
	\draw (11.5,0.6)--(11.5,1);
	\draw (3,4.2)--(7,4.2);
	\draw (3,4.3)--(3,4.1);
	\draw (7,4.3)--(7,4.1);
	\node[above] at (5,4.2) {$a_i$};
	\fill (3,2.6) circle (1.5pt);
	\node[left] at (3,2.6) {$C$};
	\fill (7,2.6) circle (1.5pt);
	\node[right] at (7,2.6) {$D$};
	\fill (7,1) circle (1.5pt);
	\node[above right] at (7,1) {$E$};
	\fill (7,0.6) circle (1.5pt);
	\node[below] at (7,0.6) {$F$};
	\fill (11.5,1) circle (1.5pt);
	\node[above right] at (11.5,1) {$G$};
	\fill (11.5,0.6) circle (1.5pt);
	\node[below] at (11.5,0.6) {$H$};
	\fill (6,2.6) circle (1.5pt);
	\node[above] at (6,2.6) {$I$};
	\fill (11.5,0.85) circle (1.5pt);
	\node[right] at (11.5,0.85) {$J$};
	\draw [dashed] (6,2.6)--(11.5,0.85);
	\fill (7,2.27) circle (1.5pt);
	\node[below left] at (7,2.27) {$L$};
	\fill (11.06,1) circle (1.5pt);
	\node[above] at (11.06,1) {$M$};
	\fill (7,0.8) circle (1.5pt);
	\node[left] at (7,0.8) {$K$};
	\end{tikzpicture}
	\caption{An illustration of the points $L$ and $M$ defined above. \label{fig-inter}}
	\end{figure}

	Consider $\angle LID$. We have $$\tan \angle LID = \frac{LD}{ID} = \frac{ED+GJ}{ID+EG}.$$
	Thus, $$LD = \frac{ED + GJ}{1 + EG/ID}.$$

	Because $ED = 4\delta, EG = d_s/3, ID \leq CD = a_i < d_s/12$ and $GJ \leq GH = \delta$, we have
	\begin{align*}
		LD &< \frac{4\delta + \delta}{1 + (d_s/3)/(d_s/12)} \leq \delta.
	\end{align*}
	Thus $LE = 4\delta - LD > 3\delta$.

	Consider $\angle MLE$. We again have $$\tan \angle MLE = \frac{ME}{LE} = \frac{EG}{LE + GJ}.$$
	Thus $$ME = \frac{EG}{1 + GJ/LE}.$$

	Substituting $EG = d_s/3, GJ \leq GH = \delta$ and $LE > 3\delta$ into the equality above, we have
	\begin{align*}
		ME &> \frac{d_s/3}{1 + \delta/(3\delta)} = \frac{d_s}{12}.
	\end{align*}

	Because $L$ and $K$ are in the bar and $E$ is on $LK$, $E$ must also be in the bar. Consider the triangle $\triangle LEM$. Because all of its vertices are in the bar, the whole triangle is contained in the bar. However, the smallest height of the triangle (the one with the hypotenuse as the base) is of height
	\begin{align*}
		\frac{LE \cdot ME}{LM} &= \frac{LE \cdot ME}{\sqrt{LE^2 + ME^2}} = \frac{LE}{\sqrt{(LE/ME)^2 + 1}} > \frac{LE}{\sqrt{2}} > \frac{3\delta}{\sqrt{2}} > 2\delta
	\end{align*}
	where the first inequality follows from the fact that $LE \leq 4\delta < d_s/12 < ME$.
	This contradicts Lemma~\ref{lem-tri-height}. Hence at least one point in $R \cap T_i$ must be in a partition rectangle in the final packing configuration.

	Similarly, at least one point in $R \cap T_j$ must be in a partition rectangle. Consider a point $P$ in $R \cap T_i$ and a point $Q$ in $R \cap T_j$ that are contained in partition rectangles. Because $P \in T_i$ and $Q \in T_j$, we have $PQ \geq d_s/3$. By our construction, $d_s/3 > \sqrt{p^2 + 1}$, and thus $P$ and $Q$ cannot fit within the same partition rectangle.

	Consider the locations of $P$ and $Q$ in the source polygon. Because $P \in T_i$ and $Q \in T_j$, their $x$ coordinates can differ by at most $(n - 1)d_s + \sum_{a \in A} a$; this corresponds to the extreme case where $P$ is on the left edge of the first element rectangle and $Q$ is on the right edge of the $n$th element rectangle. The $y$ coordinate of the two points also cannot differ by more than $5\delta$. As a result, the distance between the two points is at most $\sqrt{((n - 1)d_s + \sum_{a \in A} a)^2 + (5\delta)^2}$, which is less than $d_t$. Thus, $P$ and $Q$ cannot be in different partition rectangles either. This is a contradiction, completing the proof of Lemma~\ref{lem-one-per-rec}.
\end{proof}
}

\section{Variations and Open Questions} \label{sec-variant} \label{sec-open}

Table~\ref{variations table} lists variations of \textsc{$k$-Piece Dissection} and whether our proofs of NP-hardness and inapproximability continue to hold. Because it is obvious from the proofs, we do not give detailed explanations as to why the proofs still work (or do not work) in these settings.  Specifically:
\begin{enumerate}
	\item Our proofs remain valid when the input polygons are restricted to be simple (hole-free) and orthogonal with all edges having integer length.\footnote{Our reduction uses rational lengths, but the polygons can be scaled up to use integer lengths while still being of polynomial size.}
	\item Our results still hold under any cuts that leave each piece connected and Lebesgue measurable.
	\item Our proofs work whether or not rotations and/or reflections are allowed when packing the pieces into~$Q$.
\end{enumerate}

\begin{table}[h!]
	\centering
	\begin{tabular}{| c | c | c |}
		\hline
		\bf Variation on & \bf Variation description & \bf Do our results hold? \\
		\hline \hline
		\multirow{4}{*}{Input Polygons} & Polygons must be orthogonal & YES \\ \cline{2-3}
		& Polygons must be simple (hole-free) & YES \\ \cline{2-3}
		& Edges must be of integer length & YES \\ \cline{2-3}
		& Polygons must be convex & NO \\
		\hline \hline
		\multirow{4}{*}{Cuts Allowed} & Cuts must be straight lines & YES \\ \cline{2-3}
		& Cuts must be orthogonal & YES \\ \cline{2-3}
		& Pieces must be simple (hole-free) & YES \\ \cline{2-3}
		& Pieces may be disconnected & NO \\
		\hline \hline
		\multirow{2}{*}{Packing Rules} & Rotations are forbidden & YES \\ \cline{2-3}
		& Reflections are forbidden & YES \\ \cline{2-3}
		\hline
	\end{tabular}
        \caption{Variations on the dissection problem.}
        \label{variations table}
\end{table}

While we have proved that the \textsc{$k$-Piece Dissection} is NP-hard and that its optimization counterpart is NP-hard to approximate, we are far from settling the complexity of these problems and their variations. We pose a few interesting remaining open questions:
\begin{itemize}
	\item Is \textsc{$k$-Piece Dissection} in NP, or even decidable? We do not know the answer to this question even when only orthogonal cuts are allowed and rotations and reflections are forbidden.
In particular, there exist two-piece orthogonal (staircase) dissections
between pairs of rectangles which seem to require a cut comprised of
arbitrarily many line segments \cite[p.~60]{Frederickson-1997}.

	\iffull
If we require each piece to be a polygon with a polynomial number of sides, then problem becomes decidable.  In fact, we can place this special case in the complexity class $\exists \mathbb{R}$, that is, deciding true sentences of the form $\exists x_1 : \cdots : \exists x_m : \varphi(x_1, \dots, x_m)$ where $\varphi$ is a quantifier-free formula consisting of conjunctions of equalities and inequalities of real polynomials. To prove membership in $\exists \mathbb{R}$, use $x_1, \dots, x_m$ to represent the coordinates of the pieces' vertices in $P$ and $Q$. Then, use $\varphi$ to verify that the pieces are well-defined partitions of $P$ and $Q$ and that each piece in $P$ is a transformation of a piece in~$Q$; these conditions can be written as polynomial (in)equalities of degree at most two. $\exists \mathbb{R}$ is known to be in PSPACE~\cite{Canny-1988}.
	\fi
	\item Is \textsc{$k$-Piece Dissection} still hard when one or both of the input polygons are required to be convex?
	\item Can we prove stronger hardness of approximation, or find an approximation algorithm, for \textsc{Min Piece Dissection}? The current best known algorithm for finding a dissection is a worst-case bound of a pseudopolynomial number of pieces~\cite{FeatureSize_EGC2011f}.
        \item Is \textsc{$k$-Piece Dissection} solvable in polynomial time for constant~$k$?  Membership in FPT would be ideal, but even XP would be interesting.
\end{itemize}

\section*{Acknowledgments}

We thank Greg Frederickson for helpful discussions.

\bibliographystyle{splncs03}
\bibliography{kpiece}

\appendix
\magicappendix

\end{document}